\def\Ot{\tilde{O}}
\def\Lovasz{Lov{\'a}sz }
\newtheorem{theorem}{Theorem}
\newtheorem{lemma}[theorem]{Lemma}
\newtheorem{claim}[theorem]{Claim}
\newtheorem{corollary}[theorem]{Corollary}
\newtheorem{definition}[theorem]{Definition}
\newtheorem{property}{Property}
\def\cal{\mathcal}
\newcommand{\lca}{\mbox{lca}}
\newcommand{\poly}{\mbox{poly}}
\newcommand{\eps}{\varepsilon}
\newcommand{\ignore}[1]{}
\newcommand{\etal}{{\em et al.\ }}
\newenvironment{proof}{{\bf Proof:\ }}{\hfill$\Box$\medskip}
\begin{document}
\title{Fast, precise and dynamic distance queries}
\author{
\and Yair Bartal\thanks{
 Hebrew University.
 \texttt{\scriptsize yair@cs.huji.ac.il}
}
\and Lee-Ad Gottlieb\thanks{
 Weizmann Institute of Science.
 \texttt{\scriptsize lee-ad.gottlieb@weizmann.ac.il}
}
\and Tsvi Kopelowitz\thanks{
 Bar Ilan University.
 \texttt{\scriptsize \{kopelot,moshe,liamr\}@cs.biu.ac.il}
}
\and Moshe Lewenstein\footnotemark[3]
\and Liam Roditty\footnotemark[3]
}

\date{}
\maketitle
\thispagestyle{empty}

\begin{abstract}\noindent
We present an approximate distance oracle for a point set $S$ with $n$ points 
and doubling dimension $\lambda$. For every $\eps>0$, the oracle supports 
$(1+\eps)$-approximate distance queries in (universal) constant time, occupies 
space $[\eps^{-O(\lambda)} + 2^{O(\lambda \log \lambda)}]n$, and can be 
constructed in $[2^{O(\lambda)}\log^3 n + \eps^{-O(\lambda)} + 2^{O(\lambda 
\log \lambda)}]n$ expected time. This improves upon the best previously known 
constructions, presented by Har-Peled and Mendel~\cite{HaPe06}. Furthermore, 
the oracle can be made fully dynamic with expected $O(1)$ query time and only 
$2^{O(\lambda)} \log n + \eps^{-O(\lambda)} + 2^{O(\lambda \log \lambda)}$ 
update time. This is the first fully dynamic $(1+\eps)$-distance oracle.

\end{abstract}

\newpage
\setcounter{page}{1}

\section{Introduction}
A distance oracle for a set of $n$ points $S$ under some distance 
function $d(\cdot,\cdot)$, is a preprocessed data structure that given two 
points $x,y \in S$ returns their distance without needing to query the 
distance function. Distance oracles are of interest when the distance 
function is too large to store (for example when the function is a distance 
matrix storing all $O(n^2)$ interpoint distances) or when querying the 
distance function is expensive (for example when the distance function is 
defined by graph-induced distances).

Distance oracles were introduced in a seminal paper of Thorup and Zwick
\cite{ThZw05}. For a weighted undirected graph, they gave an $(2k-1)$-approximate 
distance oracle with query time $O(k)$, for $k \ge 1$. Immediate from these
runtimes is the question of reduced query time,
and in fact Mendel and Naor~\cite{MeNa06} recently presented a 
$O(k)$-approximate distance oracle for general metrics with $O(1)$ query time.
Another direction for improvement is in the approximation guarantee, and 
distance oracles with $(1+\eps)$-approximation ($0<\eps \le \frac{1}{2}$) 
have been achieved for planar graphs \cite{Klein02,Thorup04},
geometric graphs \cite{GuLeNaSm08} and doubling spaces \cite{HaPe06}.

A further question in this field is that of dynamic distance oracles. In this
setting, the point set $S$ is updated with the removal or addition of points,
and the distance oracle must be updated accordingly. A similar paradigm was considered
by \cite{RZ04}, who gave a distance oracle for an unweighted undirected graph under the 
removal of edges. Here, the distance function is the shortest path metric of the 
underlying graph, which must be consulted during an oracle update.

In this paper we consider a metric space with doubling dimension $\lambda$,
and present a $(1+\eps)$-approximate distance that answers
queries in (universal) constant time. The distance oracle occupies near-optimal space
$[\eps^{-O(\lambda)} + 2^{O(\lambda \log \lambda)}]n$, and
can be constructed in 
$[2^{O(\lambda)}\log^3 n + \eps^{-O(\lambda)} + 2^{O(\lambda \log \lambda)}]n$  expected  time.
This improves upon the best previously known constructions in this setting, presented by 
Har-Peled and Mendel~\cite{HaPe06}.
Furthermore, this oracle can be made fully dynamic with expected $O(1)$ query time. In this case, 
the update time is only
$2^{O(\lambda)} \log n + \eps^{-O(\lambda)} + 2^{O(\lambda \log \lambda)}$
time per point.

\paragraph{Related work.}
Thorup and Zwick \cite{ThZw05} demonstrated that a weighted undirected graph can be
preprocessed to create an oracle that can answer $(2k-1)$-approximate
distance queries between any two vertices in $O(k)$ time. The structure is
of size $O(n^{1+1/k})$, and the randomized preprocessing takes
$O(mn^{1/k})$ time (where $n$ is the number of vertices and $m$ is the
number of edges). Roditty, Thorup and Zwick~\cite{RoThZw05} gave a
deterministic preprocessing algorithm that builds the distance oracle in
$\tilde{O}(mn^{1/k})$ time. Baswana and Sen~\cite{BaSe06} and Baswana and
Kavitha~\cite{BaKa06} improved the deterministic preprocessing time to
$\Ot(\min(m \sqrt{n}, kn^{2+1/k}))$. Mendel and Naor~\cite{MeNa06} showed
that for any metric space there exists an $O(k)$-approximate distance
oracle of size $O(n^{1 + 1/k})$ that supports queries in constant time
independent of $k$.

Turning to lower bounds, Thorup and Zwick~\cite{ThZw05} proved that any
$(2k+1)$-approximate distance oracle must have size at least
$\min(m,\Omega(n^{1+1/k}))$. Very recently, Sommer, Verbin, and
Yu~\cite{SoVeYu09} extended a technique of P\v{a}tra\c{s}cu~\cite{Pa08} to
prove that a $k$-approximate distance oracle preprocessed in $t$ time must
occupy $n^{1+\Omega(1/tk)}$ space.

While the previous results apply to arbitrary metric space, distance
oracles have also been studied for more restricted settings.
Klein~\cite{Klein02} and Thorup~\cite{Thorup04} considered planar graphs,
and showed how to build a $(1+\eps)$-distance oracle with $O((n \log
n)/\eps)$ space and $O(\eps^{-1})$ query time. (Thorup~\cite{Thorup04}
presented an oracle for directed planar graphs.) Gudmundsson, Levcopoulos,
Narasimhan and Smid~\cite{GuLeNaSm08} considered geometric (Euclidean)
graphs that are $t$-spanners for some constant $t>1$. (A graph $G = (S,E)$
is said to be a $t$-spanner for $S$, if for any point pair $p,q \in S$,
there exists in $G$ a path connecting $p$ and $q$, and the length of this
path is at most $t$ times the true distance between $p$ and $q$.) They
showed how to construct a $(1+\eps)$-approximate distance oracle of size
$O((\frac{t}{\eps})^d n \log n)$. Their oracle can be constructed in
$(\frac{t}{\eps(t-1)})^{O(d)} n \log n$ time, and answers distance queries
in $O((\frac{t}{\eps})^d)$ time. Har-Peled and Mendel~\cite{HaPe06} considered  metric spaces with low doubling dimension. They presented two data structures both of size $\eps^{-O(\lambda)}n$ (which attains the lower-bound on the space required for this task).
Their first data structure can be constructed in $2^{O(\lambda)} n \log ^2 n$ time and answers $(1+\eps)$-approximate distance queries in $2^{O(\lambda)}$ time. Their second data structure can be constructed in polynomial time and answers $(1+\eps)$-approximate distance queries in $O(\lambda)$ time.

\paragraph{Our contribution.}
Our result improves the query time from $O(\lambda)$ in the construction of Har-Peled and Mendel~\cite{HaPe06} to constant time,
while also providing the first fully dynamic oracle construction.
As in Har-Peled and Mendel \cite{HaPe06}, an immediate application of our static oracle for $S$
is a $(1+\eps)$-approximate distance oracle for every graph which is a
$t$-spanner of $S$. Our static oracle is a dramatic improvement over those 
of Gudmundsson \etal \cite{GuLeNaSm08} reducing the query time to constant in several aspects, 
while the space is smaller by a factor of $\log n$ and the setting is more general. 

To obtain our improved bounds, we present contributions in several distinct areas, including 
dynamic embeddings and dynamic tree structures. We present two probabilistic dynamic 
embeddings for doubling spaces: The first is into a tree metric, and the second is a 
snowflake embedding into Euclidean space (see Section~\ref{sec:dyn-embed}). In both cases, we 
are interested in the probability of low distortion, as opposed to expectation. This seems to 
be the first consideration of dynamic embeddings (although the related notion of on-line 
embeddings recently appeared in \cite{InMa10}). We also present a powerful dynamic tree 
structure that allows a binary search over centroid paths in our setting (see 
Section~\ref{sec:backup1}). Our oracle framework and the tools used in this paper further 
imply other distance oracles with various tradeoffs. A brief summary of these results is 
found in Table~\ref{tab:misc}.

\renewcommand{\arraystretch}{1.2}
\begin{table}
\small
\centering
\begin{tabular}{||c|c|c|c|c||}
\hline
 & Reference & Static construction & Query time & Space  \\
\hline\hline 
1& \cite{HaPe06} & [$2^{O(\lambda)} \log ^2 n + \eps^{-O(\lambda)}]n$ & $2^{O(\lambda)}$ 
& $\eps^{-O(\lambda)}n$ \\
\hline
2& \cite{HaPe06} & $\poly(n)$ & $O(\lambda)$ & $\eps^{-O(\lambda)}n$ \\
\hline
3& Section~\ref{sec:backup1} & $[2^{O(\lambda)}\log n + \eps^{-O(\lambda)}]n$ & 
$O(\log\log n)$ & $\eps^{-O(\lambda)}n$ \\
\hline
4& Section~\ref{sec:misc} & $[2^{O(\lambda)}\log^3 n + \eps^{-O(\lambda)}]n$* & 
$O(\log\log\log n)$ & $\eps^{-O(\lambda)}n$ \\
\hline
5& Section~\ref{sec:misc} & $[2^{O(\lambda)}\log^3 n + \eps^{-O(\lambda)} + 2^{O(\lambda 
\log \lambda)}]n$* & $O(\log\log \lambda)$ & $\eps^{-O(\lambda)}n$ \\
\hline
6& Theorem~\ref{thm:static-oracle} & $[2^{O(\lambda)}\log^3 n + \eps^{-O(\lambda)} + 
2^{O(\lambda \log \lambda)}]n$* & $O(1)$ & $[\eps^{-O(\lambda)}+ 2^{O(\lambda \log \lambda)}]n$ \\
\hline\hline
 & Reference & Dynamic updates & Query time & Space  \\
\hline\hline 
7& Theorem~\ref{thm:backup2} & $2^{O(\lambda)} \log n + \eps^{-O(\lambda)}$ & 
$2^{O(\lambda)}$ & $\eps^{-O(\lambda)}n$ \\
\hline
8& Theorem~\ref{thm:backup1} & $2^{O(\lambda)} \log n + \eps^{-O(\lambda)}$ & $O(\log^2 
\log n)$ & $\eps^{-O(\lambda)}n$ \\
\hline
9& Section~\ref{sec:misc} & $2^{O(\lambda)} \log n + \eps^{-O(\lambda)} + 2^{O(\lambda 
\log \lambda)}$ & $O(\min \{\log \log \lambda, \log \log \log \log n \})*$ & 
$\eps^{-O(\lambda)}n$ \\
\hline
10& Theorem~\ref{thm:dynamic-oracle} & $2^{O(\lambda)} \log n + \eps^{-O(\lambda)} + 
2^{O(\lambda \log \lambda)}$ & $O(1)*$ & $[\eps^{-O(\lambda)}+ 2^{O(\lambda \log \lambda)}]n$ \\
\hline
\end{tabular}
\caption{A summary of $(1+\eps)$ distance oracles. *In expectation.}\label{tab:misc}
\end{table}
\normalsize

\vspace{-10pt}
\paragraph{Paper outline.}
The rest of this paper is organized as follows. We first present preliminary points
in the next section. We then describe (in Section~\ref{sec:backbone}) the basic 
structure that forms the backbone of most of our constructions. We proceed to present 
the central contribution of this paper, the $O(1)$ query time oracles 
(both static and dynamic) in Section~\ref{sec:main}. The dynamic oracle requires two 
dynamic backup oracles, which are separate constructions of independent interest
(presented in Section~\ref{sec:backup}), and both oracles require several technical 
contributions (presented in Section~\ref{sec:technical}).

\section{Preliminaries}\label{sec:prelim}

Here we review some preliminary definitions and results that are
required in order to present our new ideas.

\paragraph{\bf Lowest common ancestor query.}
A lowest common ancestor (LCA) query on tree $T$ provides two nodes $u,v$ of $T$,
and asks for the node $w$ that is ancestral to $u$ and $v$, and is minimal in the
sense that no descendant of $w$ is ancestral to both $u$ and $v$.
LCA queries can be answered in $O(1)$ time in the word
RAM model, using a linear size data structure \cite{CoHa05}.

In the dynamic setting, Cole and Hariharan \cite{CoHa05} gave
a linear size data structure that supports LCA queries under insertions and
deletions of leaves and internal nodes to the tree. The query and update times are
all $O(1)$ under the word RAM model. We can extend their structure to also identify in
$O(1)$ time the two children of $w$ that are ancestors of $x$ and $y$
(Lemma \ref{lem:lca-child}).

\paragraph{\bf Level ancestor query.} A level ancestor query on tree $T$
provides a node $u$ and level $k$, and asks for the node $w$ that is both an
ancestor of $u$ and is $k$ nodes removed from the root of $T$. There exists a
linear size structure that supports level ancestor queries in $O(1)$ time.

In the dynamic setting, there exists a structure that supports level ancestor
queries in $O(1)$ search and update time under insertions of leaves into $T$.
However, the insertion of internal nodes is not supported by this structure
\cite{AlHo00,KoLe07}. For the purposes of this paper, we must maintain a tree under the
insertions of internals nodes, hence we are unable to utilize standard level
ancestor query structures in our dynamic setting.

\ignore{
\paragraph{\bf Membership query.}
Given a set of numbers $N$, there exists a linear data structure that
preprocesses the data in linear time, and can determine in $O(1)$ time
whether a given number $q$ is a member of the set $N$~\cite{FrKoSz84}.
For our purposes, we will not require a dynamic version of this structure.

\paragraph{Predecessor query.}
The predecessor query problem is defined as follows: Given a set of
numbers $S \subset U$ of size $n$ over a universe $U$ of size $N$,
preprocess $S$ so that given a query number $q \in U$, the largest number
in $S$ less than $q$ can be efficiently computed. There exists a linear
space data structure that can be constructed in $O(n \log n)$ time and
which answers predecessor queries in time $O(\log\log N)$ in the word RAM
model \cite{}.

For our purposes, we will not require a dynamic version of this structure.
}

\paragraph{Doubling dimension.}

For a metric $(X,d)$, let $\lambda$ be the smallest value such that every
ball in $X$ can be covered by $2^\lambda$ balls of half the radius. The {\em
doubling dimension} of $X$ is dim$(X)=\lambda$. A metric is {\em doubling}
when its doubling dimension is constant. Note that while low Euclidean
dimension implies low doubling dimension (Euclidean metrics of dimension
$d$ have doubling dimension $O(d)$ \cite{GuKrLe03}), low doubling
dimension is more general than low Euclidean dimension.
The following property can be demonstrated via a repetitive application of
the doubling property.

\begin{property}[Packing property]\label{prop:doubling}
For set $S$ with doubling dimension $\lambda$, if the minimum
interpoint distance in $S$ is at least $a$, and the diameter
of $S$ is at most $b$, then $|S| \le 2^{O(\lambda\log
(b/a))}$.
\end{property}

\paragraph{Hierarchical Partitions.}\label{sec-hier}

Similar to what was described in \cite{GaoGuiNgu04,KrLe04}, a subset of
points $X \subseteq Y$ is an $(r,s)$-discrete center set (or {\em net} in
the terminology of \cite{KrLe04}) of $Y$ ($r \le s$) if it satisfies the
following properties:
\renewcommand{\labelenumi}{(\roman{enumi})}
\begin{enumerate}
\item Packing: For every $x,y \in X$, $d(x,y) \ge r$.
\item Covering: Every point $y \in Y$ is strictly within distance $s$ of
some point $x \in X$: $d(x,y) < s$.
\end{enumerate}
The previous conditions require that the points of $X$ be spaced out, yet
nevertheless cover all points of $Y$.

A hierarchical partition for a set $S$ is a hierarchy of discrete center
sets, where each level of the hierarchy is a discrete center set of the
level beneath it. Krauthgamer and Lee \cite{KrLe04} gave a fully dynamic
hierarchy that can be updated in time $2^{O(\lambda)}\log \alpha$
($\alpha$ is the aspect ratio of $S$), where a single update to $S$ can result in
$2^{O(\lambda)}\log \alpha$ updates to the hierarchy. Cole and
Gottlieb \cite{CoGo06} presented a semi-dynamic hierarchy, where a single insertion into
$S$ can result in the insertion of $2^{O(\lambda)}$ points into the hierarchy.
However, points cannot be removed from within the hierarchy, and after many deletions the
hierarchy is rebuilt in the background.\footnote{It suffices, if the hierarchy holds $n'$
nodes (included those nodes storing deleted points), to start
rebuilding after $\frac{n'}{3}$ deletions, and to complete the
rebuilding over the next $\frac{n'}{6}$ insertions and deletions;
that is, for each update to the point set 7 updates are performed
on the background structure. The completed hierarchy
will then contain at least $\frac{n'}{2}$ points, including at
most $\frac{n'}{6}$ deleted points.}

Our constructions can make use of either hierarchy, but our descriptions will
assume the hierarchy of \cite{CoGo06}.
The bottom level of this hierarchy is the set $Y_1 =
Y_{5^0} = S$ that contains all points, and the top level $Y_{5^{\lceil
\log_5 \alpha \rceil}}$ contains only a single point.
(For ease of presentation, we assume throughout
that the minimum interpoint distance in $S$ is 1.) Each intermediate
level $0 < i < \lceil \log_5 \alpha \rceil$ is represented by a set $Y_{5^i}$, which is a
$(\frac{1}{5}5^i,\frac{3}{5}5^i)$-discrete center set for $Y_{5^{i-1}}$.
The {\em radius} of level $Y_{5^m}$ is defined to be $5^m$. A point $y \in
Y_{5^m}$ $b$-covers a point $x \in Y_{5^l}$ if $d(y,x) < b \cdot 5^l$,
and the covering property states that each points in the hierarchy is
$\frac{3}{5}$-covered by some point one level up. It can be shown by a
repeated application of the covering property that each point is
$\frac{4}{5}$-covered by some point in every higher level.

The hierarchy may be augmented with neighbor links: Each point $x \in
Y_{5^m}$ records what points of $Y_{5^m}$ are within distance $b \cdot
5^m$ of $x$ -- these are the $b$-neighbors of $x$. By the packing property, a point may have
$b^{O(\lambda)}$ such neighbors. To save space in the hierarchy, points
that have no $b$-neighbors (where $b\ge 2$) and also cover only one point
in the next level may be represented implicitly. This compression scheme
ensures that a hierarchy and neighbor links can be stored with
$b^{O(\lambda)}n$ space.

\paragraph{\bf Snowflake embedding.}
Assouad's \cite{Assouad83} snowflake embedding -- as improved by Gupta \etal
\cite{GuKrLe03} -- takes an arbitrary metric $(X,d)$ of
doubling dimension $\lambda$ (where $d$ is the metric's distance
function), and embeds the {\em snowflake} $(X,d^{1/2})$ into $O(\lambda
\log \lambda)$-dimensional Euclidean space with $O(\lambda)$ distortion.
That is, the embedding into Euclidean space achieves low
dimension and distortion, but has the `side effect' that every
interpoint distance in the metric is replaced by its square root, with
distortion to the square root at most $O(\lambda)$. Har-Peled and Mendel
\cite{HaPe06} used this embedding in the context of distance oracles.

Although \cite{GuKrLe03} did not give an exact run time
for their static embedding, the following analysis holds: The static
embedding can be achieved by first building a point hierarchy that records
$O(\lambda)$-neighbors; this can be done in
$[2^{O(\lambda)} \min \{\log n,\log \alpha \} + 2^{O(\lambda \log \lambda)}]n$ time
\cite{KrLe04,HaPe06,CoGo06}. The analysis in \cite{GuKrLe03} requires a
constructive application of the \Lovasz Local Lemma \cite{MoTa10}, which in this case
can be done in $2^{O(\lambda \log \lambda)}n$ expected time.
Given these constructions, the image for each point can easily be computed in
$2^{O(\lambda \log \lambda)}n \log \alpha$ time, but a more careful analysis
shows that $2^{O(\lambda \log \lambda)}$ work per
hierarchical point is sufficient. It follows that the entire construction can be done in
$[2^{O(\lambda)} \min \{\log n,\log \alpha \} + 2^{O(\lambda \log \lambda)}]n$
expected time. The construction of \cite{GuKrLe03} is static, and so for our purposes we
will need to create a dynamic version of the embedding (see
Section~\ref{sec:dyn-snow}).

\section{Construction backbone}\label{sec:backbone}

The backbone of our distance oracles is a point hierarchy, and we shall employ the semi-dynamic 
hierarchy of Cole and Gottlieb \cite{CoGo06} augmented with storage of $c$-neighbor pairs, and 
the distance between the neighbors in a pair. We will show below that $c=\frac{8}{5\eps}$ 
($0<\eps\le \frac{1}{2}$) is an appropriate choice. On top of this hierarchy, we define a 
parent-child relationship as follows: every point $x \in Y_{5^i}$ is a child of some point $y 
\in Y_{5^{i+1}}$ that $\frac{3}{5}$-covers $x$. This implies that points that are siblings must 
be $6$-neighbors. The parent-child relationship immediately 
defines an ancestor-descendant relationship as well. (Note that some other constructions in 
this paper, such as the dynamic embeddings of Section~\ref{sec:dyn-embed}, will require a 
different definition of the parent-child relationship.) We have the following property:

\begin{property}[Hereditary property]\label{prop:hereditary}
If two points $x,y \in Y_{5^i}$ are $c$-neighbors
then their respective parents $x',y' \in Y_{5^{i+1}}$ are $c$-neighbors as well.
\end{property}

That is, if two points $x,y \in Y_{5^i}$ have the property that
$d(x,y)\le c 5^i$, then their respective parents $x',y' \in Y_{5^{i+1}}$ have the property that
$d(x',y') \le d(x',x) + d(x,y) + d(y,y')
< \frac{3}{5}5^{i+1} + c5^i + \frac{3}{5}5^{i+1}
= \frac{6}{5}5^{i+1} + c5^i
= (\frac{6+c}{5})5^{i+1}
< c5^{i+1}$. 
This means that $x,y$ and all their ancestors up to their lowest common ancestor are all
found explicitly in the hierarchy.

The hierarchical tree $T$ is extracted from the hierarchy. $T$ has one node per 
hierarchical point, and the points of hierarchical level $Y_{5^i}$ all have 
corresponding nodes in tree level $T_i$. The parent-child relationship among points in 
the hierarchy defines the same parent-child relationship among the corresponding tree 
nodes. We will refer to the distance between two tree nodes, by which we mean the 
distance between their corresponding points. Further, we compress all nodes whose 
points are only implicit in the hierarchy; this results in the contraction of some 
unary paths. This tree will allow us to navigate the hierarchy.

\paragraph{Structural lemmas.} Here we present lemmas that will be used to prove 
correctness of our oracles. The key observation motivating our constructions is 
captured by the following lemma, a variant of which was central for the construction 
of low stretch spanners \cite{GaoGuiNgu04,Ro07,GoRo08a,GoRo08b}. While we state the 
lemmas in term of general $b$, we are actually interested in the two cases where $b=6$ 
and $b=c=\frac{8}{5\eps}$.

\begin{theorem}\label{thm:approx}
Let $x,y \in Y_{5^l}$ be a pair that are not $b$-neighbors, and let $x',y' \in Y_{5^m}$ in some
level $m>l$ be the lowest respective ancestors of $x$ and $y$ that are $b$-neighbors. Then
$d(x',y')$ is a $(1 \pm \frac{8}{5b})$-approximation to $d(x,y)$.
\end{theorem}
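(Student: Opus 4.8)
The statement bounds $d(x',y')$ in terms of $d(x,y)$, so I would prove a matching upper and lower bound on $d(x',y')$. The key structural fact is that $x'$ is an ancestor of $x$ at level $m$, hence (by the $\frac{4}{5}$-covering property established in the preliminaries) $d(x,x') < \frac{4}{5}\cdot 5^m$, and similarly $d(y,y') < \frac{4}{5}\cdot 5^m$. So the whole argument hinges on getting a good handle on $5^m$ relative to $d(x,y)$.

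**Bounding $5^m$ from above.** Since $x',y'$ are the \emph{lowest} ancestors of $x,y$ that are $b$-neighbors, their parents at level $m+1$ are \emph{not} $b$-neighbors — wait, that's backwards; by the hereditary property (Property~\ref{prop:hereditary}) if $x',y'$ were $b$-neighbors then so would their parents be, so minimality must mean that at level $m-1$ the ancestors of $x,y$ are not $b$-neighbors. Let me call those $x'',y''\in Y_{5^{m-1}}$. Then $d(x'',y'') \ge b\cdot 5^{m-1}$. Now $d(x,y) \ge d(x'',y'') - d(x,x'') - d(y,y'') > b\cdot 5^{m-1} - \frac{4}{5}\cdot 5^{m-1} - \frac{4}{5}\cdot 5^{m-1} = (b - \frac{8}{5})5^{m-1} = \frac{5b-8}{5}\cdot 5^{m-1}$. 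This gives $5^{m-1} < \frac{5\,d(x,y)}{5b-8}$, i.e. $5^m < \frac{25\,d(x,y)}{5b-8}$.

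**Assembling the two-sided bound.** For the upper bound on $d(x',y')$: by the triangle inequality $d(x',y') \le d(x',x) + d(x,y) + d(y,y') < \frac{4}{5}\cdot 5^m + d(x,y) + \frac{4}{5}\cdot 5^m = d(x,y) + \frac{8}{5}\cdot 5^m$. Plugging in the bound on $5^m$ yields $d(x',y') < d(x,y)\left(1 + \frac{8}{5}\cdot\frac{25}{5b-8}\right) = d(x,y)\cdot\frac{5b-8+40}{5b-8} = d(x,y)\cdot\frac{5b+32}{5b-8}$; one checks this is at most $d(x,y)(1+\frac{8}{5b})$ by clearing denominators (equivalently $(5b+32)\cdot 5b \le (5b-8)(5b+8) + \text{slack}$ — need to verify the exact inequality, possibly the paper's constant is slightly loose or exact). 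For the lower bound: $d(x',y') \ge d(x,y) - d(x,x') - d(y,y') > d(x,y) - \frac{8}{5}\cdot 5^m > d(x,y)\left(1 - \frac{8}{5}\cdot\frac{25}{5b-8}\right)$, and again this should be at least $d(x,y)(1-\frac{8}{5b})$.

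**Expected obstacle.** The qualitative argument is routine triangle-inequality bookkeeping; the only delicate point is that the final constants work out to exactly $1\pm\frac{8}{5b}$ rather than something slightly weaker. The bound $5^m < \frac{25\,d(x,y)}{5b-8}$ feeds a factor $\frac{8}{5}\cdot\frac{25}{5b-8} = \frac{40}{5b-8}$, and we want this to be $\le \frac{8}{5b}$, i.e. $200b \le 8(5b-8) = 40b - 64$, which is \emph{false} — so my constants must be off somewhere, most likely in using $\frac{4}{5}$-covering with the wrong level or in the definition of "lowest $b$-neighbor ancestors." The real fix is probably that the relevant non-neighbor pair sits at level $m$ itself is wrong; rather, $x,y$ being non-$b$-neighbors at level $l$ combined with the contrapositive of the hereditary property forces the first $b$-neighbor ancestors to be close to level $l$, so one should track $5^l$ versus $d(x,y)$ and relate $5^m$ to $5^{m-1}$ via the net spacing. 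I would redo the chain more carefully, using that $d(x,y) < b\cdot 5^l$ fails at level $l$ but the ancestors at level $m$ satisfy $d(x',y') < b\cdot 5^m$ while at $m-1$ they don't; the correct telescoping of covering errors (geometric series with ratio $\frac{1}{5}$, summing to $\frac{1}{4}\cdot 5^m$ worth of slack going down, not $\frac{4}{5}$ per step independently) is what produces the clean $\frac{8}{5b}$. Pinning down that geometric-series constant is the one genuine calculation.
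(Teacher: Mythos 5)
Your plan---triangle inequality giving $|d(x,y)-d(x',y')| < \frac{8}{5}\cdot 5^m$, then control $5^m$ against $d(x,y)$---is the same route the paper takes, and the obstacle you report is genuine, not a slip on your end. The paper's displayed chain asserts $\frac{8}{5}\cdot 5^m \le \frac{8}{5b}\,d(x',y')$, which is equivalent to $d(x',y') \ge b\cdot 5^m$; but the sentence immediately before has just established $d(x',y') \le b\cdot 5^m$ from the $b$-neighbor relation, so that inequality points the wrong way (and the lower-bound chain commits the same sign error). Your repair---invoking minimality of $m$ so that the level-$(m-1)$ ancestors are not $b$-neighbors (when $m=l+1$ they are $x,y$ themselves), giving $d(x,y) > (b-\tfrac{8}{5})\,5^{m-1}$---is the correct move and is exactly the content of Lemma~\ref{lem:level}(i).

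As you found, that route yields an error factor $\frac{40}{5b-8}$, and the inequality $\frac{40}{5b-8} \le \frac{8}{5b}$ reduces to $200b \le 40b-64$, impossible for any $b>0$. The geometric-series idea at the end of your writeup does not recover the constant either: $\sum_{i=l+1}^m \frac{3}{5}\cdot 5^i < \frac{3}{4}\cdot 5^m$, which is essentially the $\frac{4}{5}\cdot 5^m$ both you and the paper already use, not $\frac{1}{4}\cdot 5^m$. So the literal constant $\frac{8}{5b}$ in the theorem statement is not attainable by this argument; what is attainable---and all the rest of the paper actually needs---is that the relative error is $O(1/b)$. Choosing $c=\Theta(1/\eps)$ with a suitably adjusted hidden constant still yields a $(1\pm\eps)$-approximation, so nothing downstream collapses; but your instinct to distrust the exact value $\frac{8}{5b}$ (equivalently $c=\frac{8}{5\eps}$) was correct, and the stalling point you hit is precisely where the paper's own proof is broken.
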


\begin{proof}
The fact that $x$ and $y$ are not $b$-neighbors implies that $d(x,y) > b5^l$, while the fact
that $x'$ and $y'$ are $b$-neighbors implies that $d(x',y') \le b5^m$. The parent-child
relationship implies that
$d(x,x'), d(y,y') \le \sum_{i=l+1}^m \frac{3}{5} \cdot 5^m < \frac{4}{5} \cdot 5^m$.
It follows that
$d(x,y)
\le d(x,x') + d(x',y') + d(y',y)
< d(x',y') + \frac{8}{5} 5^m
\le d(x',y') + \frac{8}{5b} d(x',y')
= (1+\frac{8}{5b}) d(x',y')
$.
Also,
$d(x,y)
\ge d(x',y') -d(x,x')- d(y',y)
< d(x',y') - \frac{8}{5} 5^m
< d(x',y') - \frac{8}{5b} d(x',y')
= (1-\frac{8}{5b}) d(x',y')
$.
\end{proof}

When $b=6$ we have that $d(x',y')$ is a $(1 \pm \frac{4}{15})$-approximation for 
$d(x,y)$, and when $b=c$ we have that $d(x',y')$ is a $(1 \pm \eps)$-approximation for 
$d(x,y)$. Hence, the problem of finding a $(1+\eps)$-approximation for $d(x,y)$ can be 
solved by finding the lowest ancestral $c$-neighbors of $x$ and $y$ in the hierarchy. 
Later, we will also make use of the following corollary.

\begin{corollary}\label{cor:approx}
A $\delta$-approximation for $d(x',y')$ implies a $\delta(1 \pm \frac{8}{5b})$-approximation for
$d(x,y)$, and vice versa.
\end{corollary}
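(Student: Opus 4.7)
The plan is to treat this as a straightforward composition of two multiplicative approximations, using Theorem~\ref{thm:approx} as a black box. First I would fix the convention: a $\delta$-approximation to a value $v$ is a number $\hat{v}$ satisfying $v \le \hat{v} \le \delta v$ (or the symmetric two-sided version $\hat{v} \in [v/\delta, \delta v]$). Theorem~\ref{thm:approx} gives the two-sided bound $(1 - \tfrac{8}{5b}) \, d(x',y') \le d(x,y) \le (1 + \tfrac{8}{5b}) \, d(x',y')$, so we already have a sharp quantitative relationship between $d(x,y)$ and $d(x',y')$ that is independent of whatever further approximation we layer on top.

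For the forward direction, suppose we are handed a value $\hat{d}$ that is a $\delta$-approximation to $d(x',y')$, i.e.\ $\hat{d}$ lies within a factor $\delta$ of $d(x',y')$ on both sides. Chaining this with Theorem~\ref{thm:approx} gives
\[
\hat{d} \le \delta \, d(x',y') \le \delta \bigl(1 + \tfrac{8}{5b}\bigr)\, d(x,y) \cdot \tfrac{1}{1 - \tfrac{8}{5b}} \cdot (1-\tfrac{8}{5b}),
\]
and symmetrically for the lower bound; collecting factors yields exactly a $\delta(1 \pm \tfrac{8}{5b})$-approximation to $d(x,y)$. The reverse direction is symmetric: since Theorem~\ref{thm:approx} is a two-sided statement, $d(x,y)$ is itself within factor $(1 \pm \tfrac{8}{5b})$ of $d(x',y')$, so any $\delta$-approximation to $d(x,y)$ yields a $\delta(1 \pm \tfrac{8}{5b})$-approximation to $d(x',y')$ by the same chain of inequalities.

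There is no real obstacle here; the corollary is simply the observation that multiplicative approximation factors compose. The only mild issue worth flagging is notational consistency: the statement writes $(1 \pm \tfrac{8}{5b})$ as a single factor, so I would be careful to note that the constants $(1+\tfrac{8}{5b})$ and $(1-\tfrac{8}{5b})^{-1}$ differ only in second-order terms (both are $1 + O(1/b)$), which is why the bound can be cleanly written as a single multiplicative factor $\delta(1 \pm \tfrac{8}{5b})$ at the cost of an implicit constant absorbed into the $O(\cdot)$-style notation used later in the paper.
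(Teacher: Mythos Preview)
Your proposal is correct and matches the paper's approach exactly: the paper states the corollary without proof, treating it as an immediate consequence of Theorem~\ref{thm:approx} via composition of multiplicative approximation factors, which is precisely what you do. The only cosmetic issue is that your displayed chain of inequalities is a bit garbled (you multiply and divide by $(1-\tfrac{8}{5b})$ in the same line), but the intent and the subsequent remark about $(1+\tfrac{8}{5b})$ versus $(1-\tfrac{8}{5b})^{-1}$ agreeing to second order are both correct.
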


We have proved that finding the lowest ancestral $c$-neighbors of $x$ and $y$ in the
hierarchy will provide a $(1+\eps)$-approximation to $d(x,y)$. The following lemma demonstrates
a close relationship between the value of $d(x,y)$, and the level in which 
the lowest ancestral $b$-neighbors $x',y'$ of $x,y$ are found.

\begin{lemma}\label{lem:level}
Let $i$ be such that  $5^{i-1} < d(x,y) \le 5^i$. Let $x'$ and $y'$, be the
respective ancestors of $x$ and $y$ in level $Y_{5^p}$.
\renewcommand{\labelenumi}{(\roman{enumi})}
\begin{enumerate}
\item
If $p < i - 1 - \log_5(b+\frac{8}{5})$, then $x'$ and $y'$ are not $b$-neighbors.
\item
If $p \ge i - \log_5(b-\frac{8}{5})$, $x'$ and $y'$ must be either $b$-neighbors or the same
point.
\end{enumerate}
\end{lemma}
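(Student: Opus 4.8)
The plan is to bound $d(x',y')$ in terms of $d(x,y)$ using the by-now-familiar triangle-inequality estimate on ancestor distances, and then compare against the threshold $b\cdot 5^p$ that defines $b$-neighborliness. First I would record the standard bound from the parent-child covering: if $x'$ and $y'$ lie in level $Y_{5^p}$ with $p > l$, then $d(x,x'), d(y,y') \le \sum_{j=l+1}^{p} \frac{3}{5}5^j < \frac{3}{4}5^p < \frac{4}{5}5^p$, so that
$$
d(x,y) - \tfrac{8}{5}5^p \;<\; d(x',y') \;<\; d(x,y) + \tfrac{8}{5}5^p.
$$
(Here $l$ is the bottom level, which we may take to be the level of $x$ and $y$ themselves, or any level below $p$; the point is only that $x,y$ are at level $5^0=1 \le 5^p$.) This two-sided bound is the engine for both parts.

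For part (i): assume $p < i - 1 - \log_5(b + \tfrac{8}{5})$, i.e.\ $5^{p} < 5^{i-1}/(b+\tfrac{8}{5})$, hence $(b+\tfrac{8}{5})5^p < 5^{i-1} < d(x,y)$. Then
$$
d(x',y') \;>\; d(x,y) - \tfrac{8}{5}5^p \;>\; (b+\tfrac{8}{5})5^p - \tfrac{8}{5}5^p \;=\; b\,5^p,
$$
so $x'$ and $y'$ are not $b$-neighbors. For part (ii): assume $p \ge i - \log_5(b - \tfrac{8}{5})$, i.e.\ $5^p \ge 5^i/(b-\tfrac{8}{5})$, hence $(b-\tfrac{8}{5})5^p \ge 5^i \ge d(x,y)$. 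Then
$$
d(x',y') \;<\; d(x,y) + \tfrac{8}{5}5^p \;\le\; (b-\tfrac{8}{5})5^p + \tfrac{8}{5}5^p \;=\; b\,5^p,
$$
so $d(x',y') \le b\cdot 5^p$, which by definition means $x'$ and $y'$ are $b$-neighbors — unless $x' = y'$, i.e.\ the two points have already merged into a common ancestor at level $p$, which is the alternative allowed in the statement. (One should note that $b - \tfrac{8}{5} > 0$ for the relevant values $b = 6$ and $b = c = \tfrac{8}{5\eps} \ge \tfrac{16}{5}$, so the logarithm is well-defined.)

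I do not anticipate a serious obstacle here; the lemma is essentially a repackaging of the triangle-inequality bound already used in Theorem~\ref{thm:approx}, now read "the other way" — fixing $d(x,y)$ and asking which levels force (non-)neighborliness. The one mild subtlety worth stating carefully is the role of the $x' = y'$ escape clause in part (ii): the geometric bound only yields $d(x',y') \le b\cdot 5^p$, and when the ancestors coincide this "distance" is $0$ and the $b$-neighbor relation (which is defined on \emph{pairs} of distinct points) does not literally apply, so the statement must and does allow that case. Apart from that, the proof is just the two displayed chains of inequalities above together with the elementary observation that $5^a > 5^b \iff a > b$.
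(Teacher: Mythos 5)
Your proof is correct and follows essentially the same route as the paper: bound $d(x',y')$ on both sides via the $\tfrac{4}{5}$-covering property, then translate the hypothesis on $p$ into a comparison of $d(x,y)$ with $(b\pm\tfrac{8}{5})5^p$ and conclude. The only substantive additions you make beyond the paper's two-line argument are the explicit remark that $b-\tfrac{8}{5}>0$ for the values of $b$ actually used (so the logarithm is defined) and the clarification of the $x'=y'$ escape clause in part (ii); both are accurate and worth noting, though not strictly needed.
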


\begin{proof}
\noindent (i) We have that
$d(x',y')
\ge d(x,y) - d(x',x) -d(y,y')
> 5^{i-1} - 2\frac{4}{5} \cdot 5^p
= 5^{i-1} - \frac{8}{5} \cdot 5^p$.
Note that for values
$p < i - 1 - \log_5(b+\frac{8}{5})$,
we have that
$d(x',y')
> (b+\frac{8}{5})5^p - \frac{8}{5} \cdot 5^p
= b \cdot 5^p$,
and $x''$ and $y''$ cannot be $b$-neighbors.

\smallskip
\noindent (ii) We have that
$d(x',y')
\le d(x,y) + d(x',x) + d(y,y')
< 5^i + \frac{8}{5} \cdot 5^p$.
Note that for values
$p \ge i - \log_5(b-\frac{8}{5})$,
we have that
$d(x',y')
\le (b-\frac{8}{5})5^p + \frac{8}{5} \cdot 5^p
= b \cdot 5^p$,
and $x'$ and $y'$ must be $b$-neighbors or the same point.
\end{proof}

Lemma~\ref{lem:level} implies that the level of the lowest ancestral
$b$-neighbors $x',y'$ is in the
range $[i - 1 - \log_5(b+\frac{8}{5}),i - \log_5(b-\frac{8}{5})]$,
and this range is of size $O(1)$ irrespective of the value of $b$.
Crucially, this means that the levels of the lowest ancestral $6$-neighbors
of $x,y$ and the lowest ancestral $c$-neighbors of $x,y$ differ by a fixed
value ($\log c$), up to an additive constant. This implies that
finding the lowest common $6$-neighbors of $x,y$
is a useful tool to find their lowest common $c$-neighbors, and therefore
a $(1+\eps)$-approximation to $d(x,y)$. A further consequence of
Lemma~\ref{lem:level} is that a $\delta$-approximation for $d(x,y)$ (or in
fact for any descendants of $x',y'$) is
sufficient to pinpoint the level of the lowest ancestral $b$-neighbors
$x',y'$ to a range of $\log \delta + O(1)$ possible levels.

\paragraph{Deletions.} In closing this section, we note that for all dynamic 
structures
presented in this paper, deletions are handled by rebuilding in the background
(as was described in Section~\ref{sec:prelim} in the context of dynamic hierarchies):
Deleted points are kept in the structure, and when a large number of points have been deleted,
we begin to rebuild the structure in the background. This has no effect on the asymptotic
runtimes of our constructions.

\section{Oracle queries in $O(1)$ time}\label{sec:main}
In this section we present $(1+\epsilon)$-approximate distance oracles with $O(1)$
query time and size $[\eps^{-O(\lambda)} + 2^{O(\lambda \log \lambda)}]n$. The
first oracle we present is static, and the second is a dynamic version of the
static construction. In Section~\ref{sec:misc}, we briefly discuss variants of these 
constructions that appear in Table~\ref{tab:misc}

\subsection{Static oracle}\label{sec:main-static}

In this section we prove the following theorem:
\begin{theorem}\label{thm:static-oracle}
There exists a static $(1+\epsilon)$ approximate distance oracles with $O(1)$
query time and size $[\eps^{-O(\lambda)} + 2^{O(\lambda \log \lambda)}]n$. The
oracle can be updated in expected time
$[2^{O(\lambda)}\log^3 n + \eps^{-O(\lambda)} + 2^{O(\lambda \log \lambda)}]n$.
\end{theorem}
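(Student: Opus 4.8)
The plan is to reduce $(1+\eps)$-approximate distance queries to the task of locating, for a query pair $x,y$, the lowest ancestral $c$-neighbors $x',y'$ in the hierarchical tree $T$, and then to show this location task can be answered in $O(1)$ time after appropriate preprocessing. By Theorem~\ref{thm:approx} (with $b=c=\frac{8}{5\eps}$), once $x',y'$ are identified the stored distance $d(x',y')$ is the desired $(1\pm\eps)$-approximation, so the query time is dominated by the search. The search itself proceeds in two stages. First, obtain a crude constant-factor estimate of $d(x,y)$; by Lemma~\ref{lem:level}, any $\delta$-approximation to $d(x,y)$ pins the level of the lowest ancestral $b$-neighbors to within $\log\delta + O(1)$ consecutive levels, so a $2^{O(1)}$-approximation localizes the target level to $O(1)$ candidates. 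The natural source of such a crude estimate is the backbone tree $T$ together with its $6$-neighbor links: finding the lowest ancestral $6$-neighbors (say via the dynamic tree machinery of Section~\ref{sec:backup1}, or directly) gives a constant-factor estimate, which by Lemma~\ref{lem:level} sits a fixed distance $\log c$ (up to $O(1)$) below the level of the lowest ancestral $c$-neighbors. Second, among the $O(1)$ candidate levels, use LCA and level-ancestor queries on $T$ (constant time each, by the preliminaries) to extract the relevant ancestors $x',y'$ and check the $c$-neighbor lists for the stored distance; since there are only $O(1)$ levels to inspect and each point has $c^{O(\lambda)} = \eps^{-O(\lambda)}$ neighbors, a single hashed lookup per level finishes in $O(1)$.

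The main work — and the main obstacle — is the very first step: computing the crude constant-factor distance estimate (equivalently, locating the lowest ancestral $6$-neighbors) in $O(1)$ worst-case time. A straightforward walk up the tree comparing neighbor lists costs $O(\log n)$; the improvement to $O(\log\log n)$, then $O(\log\log\log n)$, etc., and finally $O(1)$, is exactly the content of rows 3–6 of Table~\ref{tab:misc}, and the final $O(1)$ bound (row 6) is obtained by the additional technical machinery — the centroid-path binary-search tree structure of Section~\ref{sec:backup1} combined with the snowflake embedding of the preliminaries and Section~\ref{sec:dyn-snow}, which provides a constant-time approximate distance in Euclidean space that then feeds Lemma~\ref{lem:level}. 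So I would: (a) build the backbone hierarchy and tree $T$ of Section~\ref{sec:backbone} with $c$-neighbor pairs and their distances stored (in hash tables for $O(1)$ lookup); (b) build the constant-time approximate-distance primitive, i.e. the snowflake embedding into $O(\lambda\log\lambda)$-dimensional Euclidean space with $O(\lambda)$ distortion on the square-rooted metric, giving an $O(\lambda^2)$-approximation to $d(x,y)$ in $O(1)$ time; (c) feed that estimate into Lemma~\ref{lem:level} to get $O(1)$ candidate levels, then use $O(1)$ LCA/level-ancestor/hash queries to pull out $x',y'$ and return $d(x',y')$.

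For the complexity bounds: the space is that of the backbone hierarchy with neighbor links, $c^{O(\lambda)}n = \eps^{-O(\lambda)}n$, plus the snowflake embedding's auxiliary structures, $2^{O(\lambda\log\lambda)}n$, plus linear-size LCA/level-ancestor structures, for a total of $[\eps^{-O(\lambda)} + 2^{O(\lambda\log\lambda)}]n$ as claimed. The construction time is the sum of: building the Cole–Gottlieb hierarchy with $c$-neighbors, $[2^{O(\lambda)}\log n + \eps^{-O(\lambda)}]n$; building the centroid-path search structure of Section~\ref{sec:backup1}, contributing the $2^{O(\lambda)}\log^3 n$ term (this is where the extra logarithmic factors enter, from balancing the recursive path-decomposition); building the snowflake embedding, $[2^{O(\lambda)}\log n + 2^{O(\lambda\log\lambda)}]n$ expected (the expectation and the $2^{O(\lambda\log\lambda)}$ come from the constructive \Lovasz Local Lemma step); and building the $O(1)$-time tree-query structures in linear time. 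Summing yields the stated expected bound $[2^{O(\lambda)}\log^3 n + \eps^{-O(\lambda)} + 2^{O(\lambda\log\lambda)}]n$. The delicate point to verify carefully is that the approximation quality of the snowflake-based estimate — an $O(\lambda^2)$-factor, independent of $n$ and $\eps$ — is a genuine \emph{constant} for fixed $\lambda$, so that Lemma~\ref{lem:level} really leaves only $O(1)$ candidate levels and Corollary~\ref{cor:approx} does not degrade the final $(1+\eps)$ guarantee (it does not, since the Euclidean estimate is used only to \emph{locate} the level, after which the exact stored $d(x',y')$ is returned).
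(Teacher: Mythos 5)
Your high-level plan (use a coarse approximation to narrow the candidate levels via Lemma~\ref{lem:level}, then pinpoint the lowest ancestral neighbors) matches the paper's, but the heart of the argument has a gap that the paper's three-step pipeline exists precisely to fill, and your proposal skips it.

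The central problem is your step (b): you invoke the snowflake embedding as an ``$O(1)$-time approximate-distance primitive,'' but it is not one. The target dimension is $D = O(\lambda\log\lambda)$, so simply computing $\|f(x)-f(y)\|_2$ costs $\Theta(\lambda\log\lambda)$ time, not $O(1)$. The paper makes the snowflake estimate constant-time only via the bit-packing trick of Lemma~\ref{lem:bit-dist}, and that trick requires $b^2 d = O(\log n)$, where $b$ is the number of bits per coordinate. That in turn requires the aspect ratio of the embedded set to be bounded by $\poly(\log n)$ --- i.e., one must already know the scale of $d(x,y)$ up to an $O(\log n)$ factor before the embedding can even be consulted in $O(1)$ time. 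This is exactly what Step~1 of the paper supplies (an $O(\log n)$-approximate oracle of Mendel--Schwab built on a $6$-spanner, which is also the actual source of the $2^{O(\lambda)}\log^3 n$ construction term, not a centroid-path structure). Step~2 then restricts the snowflake embedding to bounded-aspect-ratio neighborhoods so the bit trick applies. Your proposal collapses Steps~1 and~2 into a single global embedding query, which cannot run in $O(1)$ worst-case time as written.

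There is a second, independent gap: you assert that the $O(\lambda^2)$-approximation ``leaves only $O(1)$ candidate levels.'' By Lemma~\ref{lem:level} it leaves $\Theta(\log\lambda)$ candidate levels, which is $O(1)$ only if $\lambda = O(1)$ --- but the theorem claims universal $O(1)$ query time, with the $\lambda$-dependence pushed into the space/preprocessing terms. Resolving those last $\Theta(\log\lambda)$ levels in $O(1)$ time requires Step~3 of the paper: for each explicit node, precompute the lowest ancestral $6$-neighbors of all pairs of nearby descendants within an $O(\log\lambda)$-level window, at cost $2^{O(\lambda\log\lambda)}$ per point (which is also where that term in the space and preprocessing bounds comes from). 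Your proposal does not contain this step; LCA/level-ancestor queries alone will not pick out the correct level from $\Theta(\log\lambda)$ candidates in $O(1)$ time. Finally, the centroid-path search structure you cite belongs to the dynamic backup oracle (Section~\ref{sec:backup1}) and plays no role in the static construction.
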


Given points $x$ and $y$, the oracle finds in $O(1)$ time the lowest ancestral
$6$-neighbors $x',y'$ of $x,y$. As a consequence of Lemma~\ref{lem:level}, the
level of the lowest ancestral $6$-neighbors of $x,y$ gives the level of the
lowest ancestral $c$-neighbors of $x,y$ to within an additive constant. 
The level of the $c$-neighbors can then be found using a constant number of level
ancestor queries.

The oracle locates $x',y'$ in three steps, each of which can be implemented in $O(1)$
time: In the first step we compute an $O(\log n)$ approximation to $d(x,y)$.
As a consequence of Lemma \ref{lem:level}, this approximation restricts the
candidate level of $x',y'$ to a range of $O(\log\log n)$ possible levels.
The second step then derives an $O(\lambda^3)$ approximation to $d(x,y)$,
which further restricts the candidate level to $O(\log \lambda)$ possible
levels. The third step locates $x',y'$.

\paragraph{Step 1.}
The first step provides an $O(\log n)$-approximation for $d(x,y)$, which implies a $O(\log n)$
approximation for $d(x',y')$ (by Corollary~\ref{cor:approx}). First note that for
doubling metrics, there exists a $6$-stretch spanner with $2^{O(\lambda)}n$ edges that can
be constructed in time $2^{O(\lambda)}n \log n$ \cite{GaoGuiNgu04, GoRo08b}. Given this
spanner, we can construct the oracle of Mendel and Schwab \cite[Theorem 2(2)]{MeSc09} with
parameter $k=O(\log n)$, which yields an $O(\log n)$-approximate distance oracle of size
$O(n)$ that supports distance queries in $O(1)$ time, with expected construction time
$2^{O(\lambda)}n \log^3 n$. We construct the oracle in the preprocessing stage, and derive
an $O(\log n)$-approximation for $d(x,y)$ -- and therefore for $d(x',y')$ -- in $O(1)$ time.

\paragraph{Step 2.} 
The second step gives an $O(\lambda^3)$-approximation to $d(x',y')$, assuming an $O(\log 
n)$-approximation is already known. If $\lambda \ge \log^{1/3} n$, this step is unnecessary and is 
skipped. We therefore assume that $\lambda < \log^{1/3} n$.

Recall that the $O(\log n)$ approximation to $d(x',y')$ restricts the candidate levels of
$x',y'$ to a range of $r=O(\log\log n)$ levels (Lemma~\ref{lem:level}). The ancestors of
$x$ and $y$ in the top level of this range can be located via a level ancestor query on $x$
and the desired number of levels below $\lca(x,y)$ (assuming that we have recorded for every
node in $T$ its distance from the root). But the task of locating the ancestors of $x,y$ in
the bottom level of this range is frustrated by the fact that some ancestors of $x,y$ below 
$x',y'$ may be
compressed (if these nodes are below the lowest ancestral $c$-neighbors of $x,y$); these
uncompressed nodes will be ignored by the level ancestor query, which will therefore return an
incorrect level. To solve this problem, we preprocess a $\log \log n$-jump tree for $T$ (see
Section~\ref{sec:jump-tree}). A series of $\log \log n$-jump queries locate in $O(1)$ time
explicit ancestors of $x,y$ that are at most $\log\log n$ levels below $x',y'$, which will
suffice for our purposes. Call these ancestors $x'',y''$ -- By Corollary~\ref{cor:approx}, an
$O(\lambda^3)$-approximation to $d(x'',y'')$ yields an $O(\lambda^3)$-approximation to 
$d(x',y')$.

Now, for every node $u \in T_{i \cdot \log\log n}$, $i \ge 0$, (including implicit nodes) let
the neighbor set $N_u$ contain all explicit nodes that are descendants of $u$ and $u$'s
$6$-neighbors, in $r+2\log\log n$ levels below $T_i$. We preprocess the snowflake embedding
for each non-empty neighborhood, which can be done in total time $[2^{O(\lambda)}\log n +
2^{O(\lambda \log \lambda)}]n$ (since each explicit node participates in $2^{O(\lambda)}$ 
neighborhoods). Since $\lambda < \log^{1/3} n$, the target dimension of the
snowflake embedding is $d = \log^{1/3} n \log\log n$. Since the aspect ratio of each
neighborhood is $O(\log n)$ and the embedding has distortion $O(\lambda)$,
each coordinate can be stored in $b = O(\log \lambda+\log\log n) = O(\log\log n)$ bits.
Therefore $b^2d = o(\log n)$. It follows from Lemma \ref{lem:bit-dist} (see 
Section~\ref{sec:bit-trick}) that each vector
may be stored in $O(1)$ words, and the embedding distance between two vectors returned in
$O(1)$ time. Squaring the embedding distance gives a $O(\lambda^2)$ approximation to the true
distance.

It remains only to locate a neighborhood containing both $x''$ and $y''$, for which it
suffices to locate $x''$'s ancestor in the lowest level $T_{i \cdot \log\log n}$ above the
candidate range. A pointer to this ancestor can be preprocessed in time $O(\log\log n)$ per
node. Given the correct neighborhood, a $O(\lambda^{2})$-approximation for $d(x'',y'')$ can be
found in $O(1)$ time, and this yields a $O(\lambda^{2})=O(\lambda^{3})$-approximation for 
$d(x',y')$.

\paragraph{Step 3.}
The third step locates $x',y'$ in $O(1)$ time,
under the assumption that a $O(\lambda^{3})$-approximation to $d(x',y')$ is known.

As in Step 2, the $O(\lambda^{3})$ approximation to $d(x',y')$ restricts the candidate
levels of $x',y'$ to a range of $r= 3\log_5 \lambda +O(1)$ levels. The top level of this range 
is found using a level ancestor query, and then a constant number of $\log_5 \lambda$-jump
queries locate explicit ancestors of $x,y$ that are at most $\log_5 \lambda$ levels below
$x',y'$. Call these ancestors $x'',y''$, and let their level (or the level of the lower one)
be $T_i$. Note that 
$d(x'',y'') 
\le d(x'',y') + d(y',x') + d(x',y'')
< \frac{4}{5} 5^{i+4\log_5 \lambda+O(1)} 
+ 6 \cdot 5^{i+4\log_5 \lambda+O(1)} 
+ \frac{4}{5} 5^{i+4\log_5 \lambda+O(1)}
= \frac{38}{5} 5^{i+4\log_5 \lambda+O(1)}
= O(\lambda^{4} 5^i)$.

In the preprocessing stage, we find for each explicit node $u \in T_i$ all nodes in 
levels $T_{i-r-\log\lambda}$ through $T_i$ whose distance to $u$ is $O(\lambda^{4}5^i)$. 
For each such pair, we preprocess their lowest ancestral $6$-neighbors, which can all be 
done in time $2^{O(\lambda \log \lambda)}$ time per point. Now, given $x''$ and $y''$, 
their lowest ancestral $6$-neighbors can be located in $O(1)$ time.

\subsection{Dynamic oracle}\label{sec:main-dynamic}

In this section we give a dynamic version of the oracle. We prove the following theorem:
\begin{theorem}\label{thm:dynamic-oracle}
There exists a dynamic $(1+\epsilon)$-approximate distance oracle with expected $O(1)$ and
worst-case
$\min\{2^{O(\lambda)},O(\log^2\log n)\}$
query time, and size
$[\eps^{-O(\lambda)} + 2^{O(\lambda \log \lambda)}]n$.
The oracle can be maintained dynamically in 
$2^{O(\lambda)}\log n + \eps^{-O(\lambda)} + 2^{O(\lambda \log \lambda)}$
time per point update.
\end{theorem}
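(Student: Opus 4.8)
The plan is to dynamize the three-step static oracle of Theorem~\ref{thm:static-oracle}, replacing each static ingredient with a dynamic counterpart of comparable cost, and to obtain the stated worst-case query bound by falling back on the dynamic backup oracles of Section~\ref{sec:backup} whenever the fast path fails. The space and per-update bounds will then match those of the static construction, and the query will be $O(1)$ on the fast path and $\min\{2^{O(\lambda)},O(\log^2\log n)\}$ on the fallback, which yields expected $O(1)$ query time once we argue the fallback is triggered with low probability.

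First, the backbone. We maintain the Cole--Gottlieb semi-dynamic hierarchy and the extracted tree $T$ under point insertions, with deletions handled by background rebuilding as described at the end of Section~\ref{sec:backbone}; a single point update touches $2^{O(\lambda)}\log n$ hierarchical nodes. The static oracle navigated $T$ with LCA and level-ancestor queries, but $T$ now undergoes insertions of \emph{internal} nodes (when paths are uncompressed), which standard level-ancestor structures do not support. We therefore replace them by the dynamic LCA structure of Cole--Hariharan (Lemma~\ref{lem:lca-child}) together with the dynamic $k$-jump tree of Section~\ref{sec:jump-tree}, which supports the jump queries used in Steps~2 and~3 under internal-node insertion in $O(1)$ time per query and within the $2^{O(\lambda)}\log n$ update budget.

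Next, the three steps. For Step~1 we replace the static Mendel--Schwab $O(\log n)$-oracle by one of the dynamic backup oracles of Theorems~\ref{thm:backup1} and~\ref{thm:backup2} (equivalently, by the dynamic tree embedding of Section~\ref{sec:dyn-embed}): being $(1+\eps)$-oracles they certainly furnish the coarse estimate Step~1 needs, restricting the candidate level of $x',y'$ to $O(\log\log n)$ levels, and they have update cost $2^{O(\lambda)}\log n+\eps^{-O(\lambda)}$; the cheaper of the two (according to whether $2^{O(\lambda)}$ or $\log^2\log n$ is smaller for the ambient $\lambda$ and $n$) also serves as the worst-case fallback. For Step~2 we replace the static snowflake embedding by the dynamic snowflake embedding of Section~\ref{sec:dyn-snow}, maintaining for each non-empty neighborhood $N_u$ its bit-packed embedding vectors (Lemma~\ref{lem:bit-dist}); since each explicit node lies in $2^{O(\lambda)}$ neighborhoods and each neighborhood has $2^{O(\lambda\log\lambda)}$ points, recomputation on update costs $2^{O(\lambda\log\lambda)}$ per touched node, and a query that lands in a neighborhood whose embedding achieved distortion $O(\lambda)$ is answered in $O(1)$ via Lemma~\ref{lem:bit-dist}. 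For Step~3 we maintain, for each explicit node, the table of lowest ancestral $6$-neighbors of the near pairs it participates in (together with the constant number of level-ancestor/jump queries that extract the final $c$-neighbor level); the relevant sets have size $2^{O(\lambda\log\lambda)}$ and $\eps^{-O(\lambda)}$, so these tables are rebuilt on update within the $2^{O(\lambda\log\lambda)}+\eps^{-O(\lambda)}$ budget, and the stored distance $d(x',y')$ is a $(1\pm\eps)$-approximation to $d(x,y)$ by Theorem~\ref{thm:approx}. Summing the three steps and the backbone gives update time $2^{O(\lambda)}\log n+\eps^{-O(\lambda)}+2^{O(\lambda\log\lambda)}$ and space $[\eps^{-O(\lambda)}+2^{O(\lambda\log\lambda)}]n$.

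The main obstacle is the dynamic snowflake embedding and the probabilistic bookkeeping around it. Assouad's construction, as improved in~\cite{GuKrLe03}, relies on a constructive \Lovasz Local Lemma, which produces low ($O(\lambda)$) distortion only with high probability rather than always; the intro's emphasis on the \emph{probability} of low distortion (as opposed to its expectation) reflects exactly this. We must therefore (i) make the embedding incremental under point insertions with only $2^{O(\lambda\log\lambda)}$ work per affected hierarchical node, (ii) detect, at the cost of the update, which neighborhoods' embeddings came out with bad distortion (affordable since a neighborhood has only $2^{O(\lambda\log\lambda)}$ points, so all pairs can be checked) and flag them, and (iii) argue that for an oblivious query sequence the probability of hitting a flagged neighborhood -- and thus reverting to the $\min\{2^{O(\lambda)},O(\log^2\log n)\}$-time backup -- is small enough that the expected query time remains $O(1)$. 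A secondary difficulty is reconciling, under a single dynamic hierarchy, the two distinct parent-child relationships used by the backbone (Section~\ref{sec:backbone}) and by the embeddings (Section~\ref{sec:dyn-embed}).
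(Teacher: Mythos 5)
Your overall plan — dynamize the three‐step static oracle, run the backup oracles of Section~\ref{sec:backup} as a fallback, and exploit the dynamic embeddings of Section~\ref{sec:dyn-embed} in Steps~1 and~2 — is the right skeleton, and it matches the paper's. But your Step~1 has a genuine confusion that breaks the $O(1)$ expected query bound, and your treatment of the snowflake's failure probability departs from the paper in a way that adds a gap you yourself admit you cannot close.

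In Step~1 you write that the Mendel--Schwab oracle is ``replaced by one of the dynamic backup oracles of Theorems~\ref{thm:backup1} and~\ref{thm:backup2} (equivalently, by the dynamic tree embedding of Section~\ref{sec:dyn-embed}).'' These are not equivalent, and the distinction is the whole point: the backup oracles answer a query in $p=\min\{2^{O(\lambda)},O(\log^2\log n)\}$ time, so invoking one in Step~1 immediately gives a $\Theta(p)$ query, not $O(1)$. What the paper actually does is query the dynamic tree embedding of Lemma~\ref{lem:tree-embed}, which returns an estimate in $O(1)$ time but only guarantees distortion $p^{O(1)}$ with probability $1-O(1/p)$ (setting the lemma's parameter to $i=\log_{5\lambda/4}p$). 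The expected-$O(1)$ bound then comes from the arithmetic $O(1)+O(1/p)\cdot p = O(1)$: pay $O(1)$ on the fast path, and on the $O(1/p)$-probability failure event pay $p$ for the backup. Your proposal never articulates this trade-off; indeed, by presenting the backup oracle as a drop-in alternative for Step~1, it loses the very probabilistic structure that makes the expectation constant.

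The same pattern recurs in Step~2. You propose to enumerate all pairs inside each neighborhood at update time, flag neighborhoods whose embedding came out with bad distortion, and then ``argue that for an oblivious query sequence the probability of hitting a flagged neighborhood is small.'' The paper needs no flagging and no obliviousness-over-neighborhoods argument: Theorem~\ref{thm:snowflake} is a \emph{per-pair} guarantee, so choosing target dimension $D=O(\log p)$ makes the failure probability $e^{-D/16}=O(1/p)$ for \emph{the specific queried pair}, and the same $O(1)+O(1/p)\cdot p$ arithmetic gives $O(1)$ in expectation. Your neighborhood-level flagging conflates ``some pair in the neighborhood is bad'' with ``the queried pair is bad,'' which both weakens the bound (a neighborhood can be flagged because of a single bad pair while most queried pairs are fine) and introduces exactly the unresolved step you acknowledge. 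Replace the flagging scheme with the per-pair bound and the dimension choice $D=O(\log p)$, and replace the Step-1 ``backup oracle'' with the tree embedding plus the explicit failure-probability calculation, and your proposal aligns with the paper's proof.
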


The dynamic oracle is given points $x$ and $y$ as a query, and runs the two backup oracles of
Section~\ref{sec:backup} in the background. Between them, these oracles locate the
lowest ancestral $c$-neighbors of $x,y$ in
$p=\min \{2^{O(\lambda)},O(\log^2 \log n) \}$
time (Theorems~\ref{thm:backup1} and \ref{thm:backup2}).

The oracle itself searches for the lowest ancestral $6$-neighbors $x',y'$ of $x,y$. After
locating these nodes, a $\log c$-jump query can be used to descend in the tree to within a
constant number of levels of the lowest ancestral $c$-neighbors of $x,y$. As before, The
oracle locates $x',y'$ in three steps, each of which can be implemented in $O(1)$ time: In
the first step we use the probabilistic dynamic tree embedding of Section~\ref{sec:dyn-tree}
to compute a $p^{O(1)}$-approximation to $d(x,y)$ and therefore to $d(x',y')$. In the second
step we use the probabilistic snowflake embedding of Section~\ref{sec:dyn-snow} to compute
an $O(\lambda^2)$-approximation to $d(x',y')$. In the third step we locate $x',y'$.

\paragraph{Step 1.}
The first step provides a $p^{O(1)}$-approximation to $d(x,y)$, which implies a $p^{O(1)}$
approximation to $d(x',y')$. We utilize the dynamic tree embedding of
Lemma~\ref{lem:tree-embed} with parameter $i=\log_{5\lambda/4} p$. The probability that the
embedding fails to give the desired distortion $p^{O(1)}$ is given as $O(1/p)$. Since the
backup oracles
run in time $O(p)$, the event of failure does not affect the target expected runtime of
$O(1)$.

\paragraph{Step 2.}
The second step gives a $O(\lambda^2)$-approximation to $d(x',y')$, assuming an
$p^{O(1)}$-approximation is already known. If $\lambda \ge p^{O(1)}$, this step is skipped.
We therefore assume that $\lambda < p^{O(1)}$.

Recall that the $p^{O(1)}$ approximation to $d(x',y')$ restricts the candidate levels of
$x',y'$ to a range of $r=O(\log p)$ levels (Lemma~\ref{lem:level}). The top level in this range
is provided by an LCA query on the dynamic tree of Section~\ref{sec:dyn-tree}.
Then a series of $\log p$-jump queries locate in $O(1)$ time
explicit ancestors of $x,y$ that are at most $\log p$ levels below $x',y'$, which will
suffice for our purposes. Call these ancestors $x'',y''$ -- an $O(\lambda^2)$-approximation to
$d(x'',y'')$ yields an $O(\lambda^2)$-approximation to $d(x,y)$.

Similar to what was done before, we preprocess the dynamic snowflake embedding of
Section~\ref{sec:dyn-snow} for each non-empty neighborhood. Our target dimension for the
embedding is $d = O(\log p)$, so it follows from Theorem~\ref{thm:snowflake} that the
embedding achieves an $O(\lambda)$-approximation with probability of failure only $O(1/p)$,
which does not affect the expected $O(1)$ runtime of the oracle. Since the aspect ratio of
each neighborhood is $O(p)$ and the embedding has distortion $O(\lambda) = O(p)$, each
coordinate can be stored in $b = O(\log p)$ bits. Therefore $b^2d = o(\log n)$, and it
follows from Lemma \ref{lem:bit-dist} that each vector may be stored in $O(1)$ words, and
the embedding distance between two vectors returned in $O(1)$ time. Squaring the embedding
distance gives a $O(\lambda^2)$ approximation to the true distance.

We then locate a neighborhood containing both $x''$ and $y''$, for which it suffices to
locate $x''$'s ancestor in the lowest level $T_{i \cdot \log p}$ above the candidate range.
A pointer to this ancestor can be recorded dynamically in time $O(\log p)$ per node
insertion into $T$. Given the correct neighborhood, a $O(\lambda^{2})$-approximation to
$d(x'',y'')$ can be found in $O(1)$ time, and this yields a $O(\lambda^{2})$-approximation
to $d(x',y')$.

\paragraph{Step 3.}
The third step provides a constant factor approximation to $d(x,y)$ in $O(1)$ time,
under the assumption that we are provided a $O(\lambda^{2})$ approximation to $d(x,y)$.
The $O(\lambda^2)$ approximation to $d(x',y')$ restricts the candidate
levels of $x',y'$ to a range of $r=O(\log \lambda)$ levels. We can ascend to the bottom
level of this range via pointers from $x''$ and $y''$, and these pointers can be maintained
dynamically in $O(\log p)$ time per insertion into $T$.
The rest of the construction for this step is identical to the third step of the static
oracle, and can be done in $2^{O(\lambda \log \lambda)}$ time and space per node insertion
into $T$.

\subsection{Variant constructions.}\label{sec:misc}
Here, we briefly discuss three variant constructions that appear in 
Table~\ref{tab:misc}. We show that these constructions can find the lowest ancestral 
$6$-neighbors of $x,y$, after which the lowest ancestral $c$-neighbors of $x,y$ can be 
found easily by using a level ancestor query or a $k$-jump query.

Construction 4 is achieved by first running the $O(\log n)$-approximate 
oracle of \cite{MeSc09}. As mentioned in Section~\ref{sec:main-static}, this 
approximation restrics the candidate levels of the lowest ancestral $6$-neighbors to 
$O(\log\log n)$ levels. Using level ancestor queries, a binary search finds the correct 
level in $O(\log\log\log n)$ query time.

Construction 5 is achieved by running the static construction until the end of Step 2.
At the end of Step 2, the range of candidate levels in $O(\log \lambda)$, and a binary 
search on this range finds the correct level in $O(\log\log \lambda)$ query time.

Construction 9 runs the dynamic construction until the end of Step 2, at which
point the range of candidate levels is reduced to 
$O(\min \{\log \lambda, \log\log\log n \})$. 
A binary search on these levels can be executed using at most $\log\log\lambda$ 
different $k$-jump trees, resulting in query time
$O(\min \{\log\log \lambda, \log\log\log\log n \})$.

\section{Backup oracles}\label{sec:backup}

In this section, we present two dynamic oracles that find the lowest ancestral $c$-neighbors of 
points $x,y$. The maintenance of both oracles is bounded by the time to maintain a hierarchy. The 
first oracle answers query in time $2^{O(\lambda)}$, and the second in time $O(\log^2 \log n)$. 
While we have presented these constructions as backup oracles, it should be noted that they are 
contributions of independent interest.

\subsection{Dynamic oracle queries in $2^{O(\lambda)}$ time}\label{sec:backup2}

In this section, we give a dynamic oracle that given $x$ and $y$, finds their lowest 
ancestral $c$-neighbors in the hierarchy, thereby deriving a 
$(1+\eps)$-approximation to $d(x,y)$. We prove the following theorem:

\begin{theorem}\label{thm:backup2}
There exists a dynamic oracle that given $x,y \in S$ returns a 
$(1+\eps)$-approximation to
$d(x,y)$ in $2^{O(\lambda)}$ time, and supports updates in time
$2^{O(\lambda)}\log n + \eps^{-O(\lambda)}$.
\end{theorem}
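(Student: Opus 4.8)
The plan is to combine the structural lemmas of Section~\ref{sec:backbone} with the augmented hierarchy of Cole and Gottlieb~\cite{CoGo06} so that a query reduces to a local search within $2^{O(\lambda)}$ hierarchical nodes. First I would describe the data structure: maintain the semi-dynamic hierarchy augmented with $c$-neighbor pairs (with $c=\frac{8}{5\eps}$) and the interpoint distances within each neighbor list, along with the compressed hierarchical tree $T$, a dynamic LCA structure on $T$ (Cole--Hariharan \cite{CoHa05}, extended via Lemma~\ref{lem:lca-child} to report the two children of the LCA that are ancestors of the query points), and a $k$-jump structure on $T$ for $k=\log_5 c=O(\log(1/\eps))$ levels. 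By the packing property each point stores $c^{O(\lambda)}=\eps^{-O(\lambda)}$ neighbors, so together with the hierarchy this occupies $\eps^{-O(\lambda)}n$ space, and each point update triggers $2^{O(\lambda)}$ node insertions into $T$ (and neighbor-list recomputations costing $\eps^{-O(\lambda)}$ each), giving the claimed update time $2^{O(\lambda)}\log n + \eps^{-O(\lambda)}$.

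For the query on $x,y$: if $x,y$ are already $c$-neighbors in $Y_1=S$, return the stored distance, which is exact. Otherwise, by Lemma~\ref{lem:level} the level $p^{*}$ of the lowest ancestral $c$-neighbors $x',y'$ satisfies $5^{p^{*}-1} < d(x,y) \le 5^{p^{*}+O(1)}$, and more to the point it lies within an additive $O(1)$ of the level of the lowest ancestral $6$-neighbors of $x,y$, which in turn is within $O(1)$ of the LCA level. So I would: (1) query LCA$(x,y)$ in $T$ to find the node $w=\lca(x,y)$ and, via Lemma~\ref{lem:lca-child}, its two children $a,b$ that are ancestors of $x$ and $y$ respectively; (2) since $w$ covers both $a$ and $b$, the points $a,b$ are $6$-neighbors (siblings are $6$-neighbors by the construction in Section~\ref{sec:backbone}), so Lemma~\ref{lem:level} places the lowest ancestral $c$-neighbors $O(\log c)$ levels above $a,b$'s level, up to an additive constant; (3) ascend from $a$ and $b$ using a single $k$-jump query with $k=\log_5 c + O(1)$ to reach explicit ancestors within a constant number of levels of $x',y'$; (4) finally, among these constantly-many candidate levels, scan the $c$-neighbor (equivalently $(c+O(1))$-neighbor) lists of the relevant ancestors to identify the lowest level at which the two ancestors appear as $c$-neighbors, and return the stored distance, which by Theorem~\ref{thm:approx} is a $(1\pm\eps)$-approximation to $d(x,y)$. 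Each neighbor list has size $\eps^{-O(\lambda)}$, but since we only need to check membership of one specific point (the ancestor of $y$) in the list of another (the ancestor of $x$), and there are only $O(1)$ levels and $O(1)$ candidate ancestors per level, the scan costs $\eps^{-O(\lambda)}$; treating $\eps^{-O(\lambda)}=2^{O(\lambda)}$ as a parameter folded into the stated bound (as the paper does throughout, writing $\eps^{-O(\lambda)}$ separately only in the update and space bounds), the query time is $2^{O(\lambda)}$.

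The main subtlety — and the step I expect to require the most care — is the interaction between compression of implicit nodes and the level/jump navigation. Ancestors of $x,y$ strictly below the lowest ancestral $c$-neighbors may be implicit (they have no $\ge 2$-neighbors and cover only one child), so level-ancestor queries on $T$ alone would skip levels and misreport; this is exactly why the $k$-jump structure of Section~\ref{sec:jump-tree} is invoked, guaranteeing we land on an \emph{explicit} node at most $k$ levels below the target. I would verify that the target node $x'$ (lowest $c$-neighbor ancestor) is itself explicit: it has a $c$-neighbor $y'\neq x'$ at distance $\le c5^{p^{*}}$ with $c\ge 2$, hence is not eligible for compression, so once the jump lands within $O(1)$ levels of it, a constant number of explicit parent-pointer steps reaches it. The remaining bookkeeping — that the Hereditary property (Property~\ref{prop:hereditary}) guarantees all of $x',y'$ and their ancestors up to the LCA are explicitly present, so the neighbor lists we scan are genuinely available — follows directly from Section~\ref{sec:backbone}. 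Deletions are handled by the background-rebuilding scheme already described, with no asymptotic cost.
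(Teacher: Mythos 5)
There is a genuine gap at step (2), and it is precisely the gap that the paper's entire construction is designed to close. You claim that the children $a,b$ of $w=\lca(x,y)$ in the single hierarchy tree $T$ sit within $O(1)$ levels of the lowest ancestral $6$-neighbors of $x,y$, so that one $O(\log c)$-jump from there finds the lowest ancestral $c$-neighbors. That is false: Lemma~\ref{lem:level} bounds the level of the lowest ancestral $b$-neighbors, but says nothing about the level of the LCA. Two points at distance roughly $5^i$ can have their ancestor chains in $T$ stay within $6$-neighbor (indeed sibling-distance) range of each other for arbitrarily many levels above $i$ without ever merging, because ``are $6$-neighbors'' does not imply ``share a parent'' --- parent assignments are made by covering alone and can diverge at every level. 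A one-dimensional example already exhibits this: take $x=0$, $y=1$, nets $\{-2,3\}$, $\{-12,13\}$, \dots; the chains never meet at a low level. Consequently $\lca(x,y)$ can be far above level $\log_5 d(x,y)$, and a constant number of $\log_5 c$-jump queries from $a,b$ will land nowhere near $x',y'$.

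The paper's proof handles exactly this by building a forest $\cal{T}$ of $2^{O(\lambda)}$ trees over the same hierarchical nodes, each with its own set of \emph{dominant} nodes that break parent ties. Lemma~\ref{L-bad-approximation}(ii) shows that in at least one tree $T^h$ (the one in which an appropriate covering point $z$ is dominant) the LCA of $x,y$ sits at level $i+1$ or lower; part (i) gives the matching lower bound of $i-2$. The query then runs an LCA query in each of the $2^{O(\lambda)}$ trees and takes the lowest answer, which pins the level to $O(1)$ accuracy; from there the $\log c$-jump and local $c$-neighbor lookup you describe are fine. So your steps (3)--(4), the explicitness argument via Property~\ref{prop:hereditary}, and the accounting of update/space costs are all in line with the paper, but step (1)--(2) must be replaced by the dominance forest; without it, the query time is unbounded. (Two smaller points: the lowest ancestral $c$-neighbors lie \emph{below} the lowest ancestral $6$-neighbors, since $c>6$ relaxes the neighbor criterion, so the jump descends rather than ascends; and the paper's forest uses only even hierarchy levels and a tuned dominance radius $2\cdot 5^j$, both needed for the arithmetic in Lemma~\ref{L-bad-approximation}.)
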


An overview of the construction is as follows. Given hierarchy tree $T$, we
create a forest of $2^{O(\lambda)}$ distinct trees. The difference
between these trees lies solely in their parent-child relationship.
We then show that in at least one of these trees, $x$ and $y$ have their lowest
common ancestor at level $\log_5 d(x,y)$, or within $O(1)$ levels of this
level. By Lemma~\ref{lem:level} this level is within $O(1)$ levels of the
lowest ancestral $6$-neighbors of $x,y$.

\paragraph{Construction.}
We create a forest of distinct trees $\cal{T}=\{T^1,\ldots,
T^{\ell}\}$ in a manner similar to the creation of $T$.
Each tree is built on top of the point hierarchy, so all
trees share the same nodes and tree level sets. However,
we ignore every odd level of the hierarchy, so the
trees of $\cal{T}$ only have non-odd levels. Each point in
$Y_{5^j}$ (for non-odd $j$) corresponds to a unique node in tree
level $j$ of each tree $T^h$.

It remains to describe the parent-child assignments for the trees of
$\cal{T}$. A node $u \in T^h_j$ is assigned a single parent
$v \in T^h_{j+2}$ which covers $u$. Crucially, ties among
candidate parents are not broken arbitrarily (as they were for
tree $T$). Rather, each tree $T^h \in \cal{T}$ possesses a distinct
set of {\em dominant} nodes in each tree level. Given a group of
candidate parent nodes, the dominant node in the group always
takes the child. We stipulate that the distance between dominant
nodes $T^h_j$ must be greater than $2\cdot 5^j$, so that two
dominant nodes cannot vie for the same child. Further, we stipulate
that a node in $T_j$ is dominant in exactly one tree of $\cal{T}$.
Clearly, a forest of size $|\cal{T}| = 2^{\Theta(\lambda)}$ can
obey these stipulations.

The dominance assignment can be implemented as follows: When a
point $x$ is added to hierarchical level $Y_{5^j}$, a corresponding
node $u$ is added to $T^h_j$ for each $T^h \in \cal{T}$. In one
of these trees, $u$ is chosen to be dominant. (Note that by the
packing property of doubling spaces, there must be at least one
tree in which $u$ is not within distance $2 \cdot 5^j$ of any other
dominant node in the same level.) In each tree $T^h \in \cal{T}$,
$u$ is assigned as a child of the dominant node in $T^h_{j+2}$ that
covers $u$, or of an arbitrary node of $T^h_{j+2}$ if there is no dominant one.
Note that once a parent-child assignment is made, the assignment
cannot be reversed. Hence, a newly inserted dominant node does
not become the parent of previously inserted nodes that it covers.
(A reassignment would necessitate a cut-link operation on the tree,
which is not supported by either \cite{CoGo06} or \cite{CoHa05}.)
The entire forest $\cal{T}$ can be maintain in time $2^{O(\lambda)}$ per node
insertion into $T$.
The distance between $u \in T^h_j$ and its ancestor $w \in T^h_m$ is less than
$5^m \sum_{i=0}^\infty 25^{-i}
= \frac{5^m}{1-1/25}
= \frac{25}{24}\cdot 5^m$.

\paragraph{Oracle query}
Let $x$ and $y$ be two points such that
$5^{i-1} < d(x,y) \le 5^i$.
We prove the following lemma:

\begin{lemma}\label{L-bad-approximation}
\renewcommand{\labelenumi}{(\roman{enumi})}
\begin{enumerate}
\item For all $T^h \in \cal{T}$, the LCA of $x$ and $y$ in $T^h$
is in tree level $i - 2$ or higher.

\item There exists at least one $T^h \in \cal{T}$ for which the
LCA of $x$ and $y$ in $T^h$ is in level $i+1$ or lower.
\end{enumerate}
\end{lemma}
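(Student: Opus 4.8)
The plan is to bound the height at which the ancestors of $x$ and $y$ coalesce in any tree $T^h$, using the displayed bound from the construction: in every tree, the distance from a node $u \in T^h_j$ to its ancestor $w \in T^h_m$ is less than $\tfrac{25}{24}\cdot 5^m$. For part (i), suppose for contradiction that $x$ and $y$ share an ancestor $z$ in some tree $T^h$ at level $m \le i-3$. Then both $x$ and $y$ lie within distance $\tfrac{25}{24}\cdot 5^m$ of $z$, so by the triangle inequality $d(x,y) < 2\cdot\tfrac{25}{24}\cdot 5^m = \tfrac{25}{12}\cdot 5^m \le \tfrac{25}{12}\cdot 5^{i-3} < 5^{i-1}$, contradicting the hypothesis $d(x,y) > 5^{i-1}$. (Only non-odd levels exist, so "level $i-2$ or higher" is the correct way to phrase the conclusion after accounting for the possibility that $i-2$ is odd; the inequality shows the LCA cannot sit at level $i-3$ or below, hence it is at level $i-2$ or higher.) This part is routine and follows purely from the ancestor-distance bound.

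For part (ii) I need to exhibit one tree in which $x$ and $y$ do coalesce by level $i+1$. The idea is to track the ancestors of $x$ and $y$ up the hierarchy and use the covering property together with the dominant-node mechanism. Let $x_j$ and $y_j$ denote the ancestors of $x$ and $y$ in hierarchical level $Y_{5^j}$. Since $d(x,y) \le 5^i$ and each point is within $\tfrac{25}{24}\cdot 5^j$ of its level-$j$ ancestor (using the same geometric-series bound as above, applied to the original hierarchy $T$; the bound $\tfrac{4}{5}\cdot 5^j$ from Step~1 suffices here, but $\tfrac{25}{24}$ is safe), the points $x_j$ and $y_j$ are close: $d(x_j,y_j) \le d(x_j,x)+d(x,y)+d(y,y_j) < 2\cdot\tfrac{25}{24}\cdot 5^j + 5^i$. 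For $j = i$ this is $O(5^i)$, so $x_i$ and $y_i$ are within a constant number of radii of each other. Now go up two more levels: I claim that at level $i+1$ (or $i+2$, respecting parity) there is a single dominant node covering both $x_i$ and $y_i$'s region, and in the tree $T^h$ where that node is dominant, both ancestral chains are forced into it.

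The key step — and the main obstacle — is making the dominant-node argument airtight. The subtlety is that the construction only guarantees a node becomes the parent of covered nodes inserted \emph{after} it (no reassignment). So I cannot simply invoke "the dominant node at level $i+1$ covering $x_i$ takes $x_i$ as a child": if the dominant node was inserted after $x_i$, the assignment never happened. The correct argument must instead climb high enough that the relevant dominant node is guaranteed to already exist, or must argue that whatever parent $x_i$ and $y_i$ \emph{were} assigned are themselves close and thus land under a common dominant ancestor within $O(1)$ further levels. Concretely, I would argue: let $x'$ be the ancestor of $x$ in the tree level just above $i$, and $y'$ likewise; these satisfy $d(x',y') = O(5^i)$ by the hereditary-type estimate. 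By the packing property, among the $2^{\Theta(\lambda)}$ trees there is one, $T^h$, in which the node covering $x'$ in the next level up is dominant and (since dominant nodes at a level are spaced more than $2\cdot 5^j$ apart while $x'$ and $y'$ are within, say, $2\cdot 5^j$ of each other) this same dominant node covers $y'$; because $x'$ and $y'$ were both inserted before their common ancestor only if that ancestor is suitably high, I push up by at most one additional non-odd level so that the dominant ancestor predates both, forcing the chains to merge at level $i+1$ or $i+2$, hence "$i+1$ or lower" fails parity-wise only in the direction that still proves the claim. I would state the spacing constants carefully ($2\cdot 5^j$ separation of dominant nodes versus the $O(5^i)$ proximity of $x',y'$) to verify a single dominant node indeed dominates the whole neighborhood, and check that the "$+1$" slack absorbs the parity issue and the inserted-before-ancestor issue simultaneously.
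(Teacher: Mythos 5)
Your proof of part (i) is correct and is essentially the paper's argument, merely phrased as a proof by contradiction instead of a direct bound on $d(x',y')$: both use the $\tfrac{25}{24}\cdot 5^m$ bound on the distance from a node to its level-$m$ ancestor and conclude the LCA cannot sit at level $i-3$ or below.

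For part (ii) you have correctly identified the central obstacle — the ``no reassignment'' rule means that a dominant node covering $x$'s and $y$'s ancestors cannot serve as their common parent unless it was inserted \emph{before} them — but your proposed resolution does not close the gap. You suggest ``pushing up by at most one additional non-odd level so that the dominant ancestor predates both,'' but there is no reason this should work: at any level there can be a dominant node (in any given tree) that happened to be inserted late, and adding one more level of slack does not change that. Your ``packing property'' argument finds a tree where \emph{some} dominant node covers both ancestors, but gives no control over when that dominant node was inserted relative to $x'$ and $y'$.

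The paper's resolution inverts the quantifiers in a way you missed. Rather than fixing a candidate tree and hoping its dominant node is old enough, one assumes without loss of generality that $x'$ was inserted before $y'$, and then defines $z$ to be the \emph{first-inserted} point of $Y_{5^{j+2}}$ that covers $x$ (and $x'$) within radius $\tfrac{4}{5}5^{j+2}$. Since the hierarchy inserts higher levels first, some such covering point already exists at the moment $x'$ is inserted at level $j$, so $z$ (being the earliest) certainly predates $x'$, and hence predates $y'$ as well. One then lets $T^z$ be the one tree in which $z$ is dominant — so the tree is chosen \emph{after} the node, not the other way around. In $T^z$, both $x'$ and $y'$ find $z$ dominant and covering them at insertion time (once $j\ge i$ guarantees $d(z,y')\le 5^{j+2}$), so both are assigned $z$ as parent, and the LCA lands at level $j+2$. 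This is the ingredient your proposal lacks: anchoring the argument to a concrete, early-inserted $z$ and only afterward selecting the tree in which $z$ happens to dominate.
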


\begin{proof}
\noindent (i) Consider an arbitrary tree $T^h \in \cal{T}$, and
nodes $x'$ and $y'$, the respective ancestors of $x$ and $y$ in
$T^h_j$. We have that $d(x,x'),d(y,y') < \frac{25}{24}\cdot 5^j$.
We further have that
$d(x',y')
\ge d(x,y) - d(x,x') -  d(y,y')
> 5^{i-1} - 2\frac{25}{24} \cdot 5^j
= 5^{i-1} - \frac{25}{12} \cdot 5^j$.
Note that for $j < i - 3 - \log_5(25/12)$,
(or equivalently, $j \le i-4$) we have that
$d(x',y') > 2 \cdot 5^{j+2}$, and $x'$ and $y'$ cannot be
siblings. Hence, the LCA of $x$ and $y$ cannot be found in level
$i-3$ or lower and can be found in level $i-2$ or higher.

\smallskip
\noindent (ii) Consider an arbitrary tree $T^h \in \cal{T}$, and
nodes $x'$ and $y'$, the respective ancestors of $x$ and $y$ in
tree level $T^h_j$. Assume that $x'$ was inserted before $y'$. 
There must exist some covering point $z \in Y_{5^{j+2}}$ for which
$d(z,x),d(z,x') < \frac{4}{5} 5^{j+2}$.
If there exist more than one point satisfying this condition, let
$z$ be the first inserted point satisfying the condition. Also recall that
$d(y,y')<\frac{25}{24} \cdot 5^j$.
We have that
$d(z,y')
\le d(z,x) + d(x,y) + d(y,y')
< \frac{4}{5} \cdot 5^{j+2} + 5^i + \frac{25}{24} \cdot 5^j
=\frac{101}{120} \cdot 5^{j+2} + 5^i$.
Now let $T^z$ denote the tree in which $z$ is dominant, and let nodes
$x'$ and $y'$ be the respective ancestors of $x$ and $y$ in $T_z$.
Note that for values $j \ge i-2-\log_5(19/120)$ (or equivalently, for values $j
\ge i$), we have that $d(z,y') \le 5^{j+2}$, and so
$x'$ and $y'$ are both children of $z$ in $T^z$ (or are in fact
the same point). This implies that $x$ and $y$ must be descendants
of $z$. Hence, $x$ and $y$ must have a common ancestor in level
$i+1$ or below.
\end{proof}

The query proceeds by executing an LCA query for $x$ and $y$ in each tree of $\cal{T}$. We
select the lowest node among the ancestors returned from these LCA queries, say $v \in T^h_j$.
By Lemma~\ref{lem:level}, this level is within a constant number of levels of the 
lowest ancestral $6$-neighbors of $x,y$.
Given $v$, the ancestors of $x,y$ in $T_j$ can be located in $2^{O(\lambda)}$ time, and a
$\log c$-jump query then locates nodes within a constant number of levels of the lowest
ancestral $c$-neighbors of $x,y$.

\subsection{Dynamic oracle queries in $O(\log^2 \log n)$ time}\label{sec:backup1}

In this section, we give a dynamic oracle that given $x$ and $y$, finds their lowest 
ancestral $c$-neighbors in the hierarchy, thereby deriving a 
$(1+\eps)$-approximation to $d(x,y)$. We prove the following theorem:

\begin{theorem}\label{thm:backup1}
There exists a dynamic oracle that given $x,y \in S$ returns a 
$(1+\eps)$-approximation to
$d(x,y)$ in $O(\log^2 \log n)$ time, and supports updates in time
$2^{O(\lambda)}\log n + \eps^{-O(\lambda)}$.
\end{theorem}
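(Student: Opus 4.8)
The plan is to build a structure that, given $x,y$, locates their lowest ancestral $6$-neighbors in the hierarchy tree $T$ (from which the lowest ancestral $c$-neighbors follow by a single $\log c$-jump query, as in the other constructions), using the standard reduction of Lemma~\ref{lem:level}: the level of the lowest ancestral $6$-neighbors lies in a range of $O(\log\delta)$ levels once a $\delta$-approximation to $d(x,y)$ is known. The overall strategy mirrors the $O(1)$-query dynamic oracle of Section~\ref{sec:main-dynamic} but replaces the final $O(1)$-time pinpointing steps with binary searches, trading query time for a simpler (and snowflake-free, $2^{O(\lambda)}$-only) preprocessing. Concretely, I would first use the probabilistic dynamic tree embedding of Section~\ref{sec:dyn-tree} (Lemma~\ref{lem:tree-embed}) to obtain, with high enough probability, a $\poly\log\log n$-approximation to $d(x,y)$ — this already narrows the candidate level range to $O(\log\log\log n)$ levels, but more simply one can afford an $O(\log\log n)$-size range here since the query budget is $O(\log^2\log n)$. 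The LCA query in the dynamic tree gives the top of the candidate range; the bottom is reached by a constant number of $k$-jump queries (Section~\ref{sec:jump-tree}) to skip past compressed unary paths, yielding explicit ancestors $x'',y''$ within $O(\log\log n)$ levels of the target.

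The core of the construction is then a centroid-path (heavy-path) decomposition of the hierarchy tree $T$, maintained dynamically, which supports a binary search over the $O(\log\log n)$ candidate levels. For each centroid path we record, for each node on the path, its set of $6$-neighbors and the distances to them (by the packing property there are $2^{O(\lambda)}$ of these). A binary search proceeds by: locating $x''$ and $y''$ on their centroid paths, using level-ancestor-type navigation within the candidate range to test at the midpoint level whether the current ancestors of $x$ and $y$ are $6$-neighbors (a $2^{O(\lambda)}$-time lookup), and recursing on the upper or lower half accordingly. Each of the $O(\log\log n)$ binary-search steps may cross between centroid paths, and within each centroid path a (nested) binary search over positions costs another $O(\log\log n)$ factor; together with the $2^{O(\lambda)}$ neighbor-table lookup per probe this gives the claimed $O(\log^2\log n) + 2^{O(\lambda)}$ query time (the $2^{O(\lambda)}$ term is absorbed since the theorem allows it via the reduction; more carefully one arranges that only a constant number of probes incur the full neighbor scan). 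Maintaining the centroid-path decomposition and the per-node $6$-neighbor tables under insertions into $T$ costs $2^{O(\lambda)}\log n + \eps^{-O(\lambda)}$ time per point update, matching the hierarchy-maintenance bound; deletions are handled by background rebuilding as described at the end of Section~\ref{sec:backbone}.

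The main obstacle is the dynamic maintenance of a decomposition of $T$ into centroid (weight-balanced) paths under insertion of internal nodes — exactly the operation that defeats standard level-ancestor structures (as noted in Section~\ref{sec:prelim}) — while still supporting the level-bounded navigation and cross-path binary search needed above in total time $O(\log^2\log n)$ per query. The right way to handle this is to observe that the hierarchy tree has only $O(\log\alpha)=O(\log n)$ levels, that insertions of internal nodes are of a restricted form (a node is spliced onto an existing parent-child edge or added at the bottom), and that the centroid decomposition need only be approximately balanced; this is precisely the "dynamic tree structure that allows a binary search over centroid paths" promised in the introduction and developed in Section~\ref{sec:technical}, and I would invoke that machinery here. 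Modulo that tool, correctness is immediate from Lemma~\ref{lem:level} and Theorem~\ref{thm:approx}: the binary search terminates at the level of the lowest ancestral $6$-neighbors, and a $\log c$-jump query converts this to the lowest ancestral $c$-neighbors, giving a $(1+\eps)$-approximation to $d(x,y)$.
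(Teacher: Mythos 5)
Your proposal correctly identifies the centroid-path decomposition and a binary search over centroid paths as the core mechanism, and your high-level reduction (find the lowest ancestral $6$-neighbors, then jump down to the $c$-neighbors) is sound. But the proposal diverges from the paper's proof in two respects, the second of which is a genuine gap.

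First, the preliminary range-narrowing step via the probabilistic tree embedding of Section~\ref{sec:dyn-tree} does not appear in the paper's proof and is out of place here: this backup oracle is invoked by the $O(1)$ dynamic oracle of Section~\ref{sec:main-dynamic} precisely when the probabilistic embeddings fail, and Theorem~\ref{thm:dynamic-oracle} cites it as supplying a \emph{worst-case} $O(\log^2\log n)$ bound. Your own accounting is also inconsistent with the target bound: narrowing to $O(\log\log n)$ candidate levels and then binary searching would give an outer loop of only $O(\log\log\log n)$ iterations, so even with an $O(\log\log n)$-time inner step you would get $O(\log\log n\cdot\log\log\log n)$, not $O(\log^2\log n)$. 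In the paper no narrowing occurs: the $O(\log^2\log n)$ arises because the outer binary search ranges over the $O(\log n)$ centroid-path heads on the $x$-to-root path ($O(\log\log n)$ iterations), and \emph{each} probe must locate $y$'s counterpart ancestor at that level, which is done by another $O(\log\log n)$-time binary search over the centroid-path heads on the $y$-to-root path. Your proposal nests a search within a single centroid path rather than over $y$'s path, which is a different and weaker structure, and it implicitly relies on level-ancestor queries to reach the midpoint level.

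Second, and more seriously, you explicitly defer the central technical content of the proof: maintaining the centroid-path decomposition under insertion of \emph{internal} nodes while supporting level-bounded navigation without dynamic level-ancestor queries (which, as Section~\ref{sec:prelim} notes, are not supported under internal-node insertion). You write that you would ``invoke that machinery'' and point to Section~\ref{sec:technical}, but the machinery lives nowhere else than in the body of the proof of Theorem~\ref{thm:backup1} itself. The paper's resolution is: (i) replace the parent edges of the centroid-path tree by all ancestor edges, so dynamic LCA queries can substitute for level-ancestor queries; (ii) control the resulting $O(n\log n)$ size blowup by indirection, partitioning $T$ into $CT$-trees of size $O(\log n)$ and centroid-decomposing the resulting $O(n/\log n)$-node skeleton tree; (iii) relax $c$-neighbors to $c$-pseudo-neighbors so the centroid-path graph can be defined on the skeleton and maintained under the lazy decomposition updates; (iv) replace the level-ancestor query on $T$ by the inner binary search over $y$'s centroid-path heads described above; and (v) recursively partition the $CT$-trees into $O(\log\log n)$-size sub-trees so that the final extraction of the exact $c$-neighbor pair (from the pseudo-neighbor pair) costs $O(\log\log n)$ rather than $O(\log n)$. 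Without this machinery the argument does not close, and since it constitutes essentially the entire proof, the omission is substantial.
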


We begin by presenting a solution for the static version of the problem
and later show how to adapt this solution to the dynamic environment. 
We will make use of the point set $S$ and tree $T$.

\paragraph{Static construction.}
Recall that given $x$ and $y$ it is sufficient to find the lowest
ancestral $c$-neighbors of $x$ and $y$ in order to answer the query. 
This problem could be solved by a simple traversal, in parallel, on the 
paths in $T$ upwards from $x$ and $y$. At each level we check 
whether $a(x)$, the ancestor of $x$, and $a(y)$, the
ancestor of $y$, are $c$-neighbors, and the first encountered $c$-neighbors are
the lowest ancestral $c$-neighbors. (Note though that some
ancestors may not be explicit in certain levels.)
This method may require $\Theta(n)$ time.

To improve this runtime, we note that the hereditary property, 
Property \ref{prop:hereditary} implies that a binary search can
be used. This binary search can be implemented via
level ancestor queries on $T$ (see Section~\ref{sec:prelim}),
and reduces the query time to $O(\log n)$.

To further improve the query time we use a centroid path decomposition $C$
of $T$. A {\em centroid path decomposition} partitions the tree $T$ into a
collection of {\em centroid paths} in the following way. The size of a
node $u$ ($s(u)$) is the number of nodes in the subtree rooted at $u$. Each
centroid path has an associated power of 2, say $2^i$, and all nodes on
the path have size $2^i \le s(u)< 2^{i+1}$. A node $u$ is on the same
centroid path as its parent if their sizes are both between $2^i$ and
$2^{i+1}$ for some $i$.

A well-known property of centroid path decompositions is
that for any node $u$, the path from $u$ to the tree root
traverses at most $\log n$ centroid-paths (along their prefixes). To
utilize this we create a centroid path tree that contains a node for each
centroid path. The centroid path tree has an edge from centroid-path-node $p$ to
centroid-path-node $p'$ if $u$, the head of the path $p'$, is a child (in
$T$) of a node on $p$. It follows from the path-decomposition property
that the height of the centroid-path tree is $O(\log n)$.

To speed up the queries we first perform a binary search along the path from $x$-to-root 
considering only the $O(\log n)$ heads of the centroid paths on the $x$-to-root path. 
This is done by using the centroid path tree and level ancestor queries on the centroid 
path tree. The nodes evaluated are compared with to counterparts (in the same level) in 
the $y$-to-root path in $T$, to see if they are $c$-neighbors. The node on the 
$y$-to-root path on the appropriate level can be found using a level ancestor query (in 
the tree $T$). This search determines which pair of centroid paths (one overlapping the 
path of $x$-to-root and one overlapping the path of $y$-to-root) contains the nodes that 
constitute the lowest ancestral $c$-neighbors. However, these paths themselves may be of 
size $O(n)$. Therefore, we preprocess the following information: We create a {\em 
centroid path graph} with the same node-set as the centroid path tree and an edge 
between two centroid-path-nodes if their paths contain any nodes that are $c$-neighbors. 
The edges are weighted with the lowest level on which there exist $c$-neighbors on these 
paths. Trivially, the centroid path graph is not larger than $T$, and can be
preprocessed in the same time. Once this graph exists, the extraction of the 
lowest ancestral $c$-neighbors is immediate.

\paragraph{Static query time.}
The time to binary search the centroid path tree is $O(\log
\log n)$ as the height of any path (in the centroid path tree) is $O(\log
n)$. Note that although we binary search on both paths, these searches are
done one after the other and, therefore, the time is still $O(\log \log
n)$. Once the two centroid paths that contain the lowest ancestral
$c$-neighbors are found then we in $O(1)$ we can obtain the lowest
ancestral $c$-neighbors because of the preprocessing.

\paragraph{Dynamic construction.}
Now consider the dynamic version of the query problem. A dynamic version
of the above search encounters the following problems
(1) level ancestor queries are not supported in this setting, and
(2) the centroid paths, centroid path tree and graph must be maintained.

Recall that the level ancestor query was used twice, upon $T$ and upon
the centroid path tree. We will show how to remove the query on $T$ and how to
circumvent the level ancestor query upon the centroid path tree.

First, we consider the problem of a dynamic centroid path decomposition.
We will use the method from \cite{CoGo06,CoHa05,KoLe07}.
The general idea of the method is a lazy approach
achieved by changing the size constraints of the centroid paths to have
nodes of size between $2^i$ and $3\cdot 2^{i+1}$. This gives the necessary
time to (lazily) update the centroid path decomposition with worst case
$O(1)$ time per change.

Consider the centroid path tree. Define a directed edge from a leaf to an
ancestor to be an {\em ancestor edge}. We change the centroid path as
follows. The node set, i.e. a node per centroid path, remains the same.
However, the edge set is changed to be the collection of all ancestor
edges. We note that it follows directly from the lazy approach method for
the centroid paths that maintaining the ancestor edges under the dynamic
changes is possible with the same lazy approach. Hence each update can be
implemented in $O(1)$ time. Unfortunately, the number of edges in the
centroid path tree blows up to $O(n\log n)$ instead of the original $n$.
However, this can be corrected by binarizing the tree $T$ and using
indirection on the tree in a method described in \cite{CoHa05,KoLe07}. The
idea follows along the following lines.

The tree $T$ is partitioned into a collection of trees $CT$ of size
$O(\log n)$ such that every node of $T$ is in $CT$ and an edge of $T$ is
in $CT$ if it connects two nodes in the same tree in $CT$. The property of
this partition is that each tree in $CT$ has at most two other children
trees of $CT$. A skeleton tree $\hat{T}$ containing the roots of the
$CT$-trees as nodes and children-parent edges according to the $CT$ tree
relationship are created. See \cite[Section 6]{KoLe07}, for
details of this skeleton tree and its dynamic handling. Obviously the
size of the skeleton tree is $O(n/\log n)$. We will use a centroid path
decomposition on the skeleton tree and create accordingly a centroid path
tree. The centroid path tree can now handle the dynamic changes and
searches and maintain a size of $O(n)$.

A change needs to be made to the
centroid path graph as well. Note that the centroid path graph, as opposed to
the centroid path decomposition and centroid path tree, is unique to this problem.
Beforehand, two centroid paths had an edge
between them if there was a $c$-neighbor pair. We slightly change this
definition such that two nodes (both in the skeleton tree) will be {\em
$c$-pseudo-neighbors} if one of them is a $c$-neighbor of a node in the $CT$
tree of the other. In the centroid path graph two centroid paths will be
neighbors if there are a pair of nodes that are $c$-pseudo-neighbors. The
weight of the edge, similar to before, will be the level of the lowest
level for which we have a pair of $c$-pseudo-neighbors (the level is defined according to the node
with the lower level).

Finally, we need to replace the level ancestor query which we used
upon $T$. This query was done when we had an ancestor of $x$ which was the
head of a centroid path $p$ on some level, say $j$, and we needed to find
its counterpart, i.e. the ancestor of $y$ on level $j$, to see if they are
$c$-neighbors. The replacement will be a binary search on the path from
$y$ to root in $T$ along the heads of the centroid paths. This is done
until we are in the position where we have two centroid paths $p'$ and
$p''$ on the $y$-to-root path, where $p''$ is the son of $p'$ in the
centroid path tree and where the level of the head of the path of $p'$ is
$\geq j$ and the level of the head of $p''$ is $< j$. It can be verified
that the counterpart of the ancestor of $x$ is in a $CT$ tree whose root
is on the centroid path $p'$ and hence if the ancestor of $x$ and it's counterpart are $c$-neighbors then
the ancestor of $x$ and the root of the $CT$ tree (containing the counterpart) are $c$-pseudo-neighbors.
Hence, there is an edge $(p,p')$ in the centroid path graph. Conversely,
if there is an edge $(p,p')$ because the level of the head of $p'$ is
lower than the head of $p$ it follows from the hereditary property that
the mentioned ancestor of $x$ and its counterpart must be $c$-neighbors.
Finding the lowest ancestral $c$-neighbors is done by finding the lowest
pair of nodes (which are $CT$ tree roots that are $c$-pseudo-neighbors). Then
one needs to extract the appropriate node from the $CT$ tree of one which
is on the level of the root of the other. This can be done with a simple
scan in the $CT$-tree.

\paragraph{Dynamic query time.}
A binary search on the path of $x$ can be done in $O(\log\log n)$ as in the static
case. However, for each step in the binary search on the path of $x$, we must
execute a binary search over the path of $y$, in order to locate the ancestor of $y$ in
the correct level. Now, there is the additional step of moving from $c$-pseudo-neighbors to
$c$-neighbors in order to find the lowest ancestral $c$-neighbors may cost
$O(\log n)$ time because of the size of the $CT$ tree. However, if we
recurse the above-described method partitioning each of the $CT$ trees
then we will have small-$CT$ trees of size $O(\log \log n)$ and extracting
the appropriate node will take only another $O(\log \log n)$ steps.

\ignore{
\section{Oracle for $t$-spanner}\label{sec-t-spanner}

Here we consider the construction of a (static) oracle for a point set $S$
in a graph. More precisely, the points set $S$ lies in an ambient space
$(S,d(\cdot,\cdot))$ of doubling dimension $\lambda$, where the function
$d(\cdot,\cdot)$ defines the {\em ambient distance}. The input is further
extended with a set of edges $E$, and the length of an edge connecting
$x,y \in S$ is defined to be $d(x,y)$. The edge set $E$ implies a graph
space $(S,d_E(\cdot,\cdot))$, where the {\em graph distance} $d_E(x,y)$ is
defined to be the shortest path distance between $x$ and $y$ in the graph.
We are given that the input graph is a $t$-stretch spanner for $S$:
$d(x,y) \le d_E(x,y) \le t \cdot d(x,y)$.

We show how to construct an oracle for the graph space $(S,d_E(x,y))$. Let
us ignore for the moment the edge set $E$ and focus on the ambient space
$(S,d(\cdot,\cdot))$. Using the techniques developed above, we build a
hierarchy and $1+\frac{\eps}{t}$ approximate distance oracle for the
ambient space $(S,d(\cdot,\cdot))$. (Here, $ c \ge (\frac{16t}{\eps} +
14)$.) A distance query on point $x$ and $y$ returns a pair of
hierarchical points $x',y'$ in level $Y_{5^i}$ with the property that they
are ancestors of $x$ and $y$ in the hierarchy, and that they are minimal
in the sense that no lower ancestors of $x$ and $y$ are $c$-neighbors
(Theorem \ref{T-spanner}).

Now we turn to the graph space $(S,d_E(\cdot,\cdot))$. We prove
the following lemma:

\begin{lemma}
$d_E(x',y')$ is a $1+\eps$ approximation of $d_E(x,y)$.
\end{lemma}

\begin{proof}
First note that
$d_E(x,y)
\le d_E(x,x') + d_E(x',y') + d_E(y',y)
\le t \cdot d(x,x') + d_E(x',y') + t \cdot d(y',y)
\le d_E(x',y') + t\cdot \frac{8}{5} \cdot 5^i$
and also
$d_E(x,y)
\ge d_E(x',y') - d_E(x,x') - d_E(y',y)
\ge d_E(x',y') - t \cdot d(x,x') - t \cdot d(y',y)
\ge d_E(x',y') - t\cdot \frac{8}{5} \cdot 5^i$
Further, the minimality of the pair $x'$ and $y'$ implies that their
children are not $c$-neighbors in the ambient space. Hence,
$d_E(x',y')
\ge d(x',y')
> c \cdot 5^{i-1} - \frac{3}{5}\cdot 5^{i-1} - \frac{3}{5}\cdot 5^{i-1}
= (c-\frac{6}{5}) \cdot 5^{i-1}$.
Noting that
$\frac{t \cdot \frac{8}{5} \cdot 5^i}{(c-\frac{6}{5}) \cdot 5^{i-1}}
= \frac{8t}{c-\frac{6}{5}}
< \eps$ completes the proof.
\end{proof}

We however do not know the value of $d_E(x',y')$. To determine
$d_E(x',y')$, we will precompute the graph distances between all
hierarchical points that are ambient $c$-neighbors, or rather graph
$ct$-neighbors. It suffices to run Dijkstra's algorithm on each
hierarchical point $x \in Y_{5^i}$, and terminate when the radius of the
cloud is $ct \cdot 5^i$. A naive analysis gives a run time bound of $O(n^2
\log n +n|E|)$ to discover all $ct$-neighbors (worst-case run time $O(n
\log n +|E|)$ for each of $n$ searches). A more careful analysis notes
that a point $x$ is touched by at most $2^{\lambda \log (t/\eps)}$
searches of level $Y_{5^i}$, since each search is rooted at a point of
$Y_{5^i}$ and terminates at ambient distance $c$. Hence, a level can be
searched in $O(n \log n +|E|)$ time, and all levels in $O((n \log n
+|E|)\log \alpha)$ time. }

\section{Technical contributions}\label{sec:technical}

In this section we present technical constructions utilized by the distance
oracles.

\subsection{Euclidean distance oracle}\label{sec:bit-trick}
The following lemma utilizes atomic word operations to find the exact distance
between (sparse) Euclidean points in $O(1)$ time.

\begin{lemma}\label{lem:bit-dist}
Let $S$ be a dynamic set of $d$-dimensional vectors, where each coordinate is a
$b$-bit number. If $b\cdot d^2 = O(\log n)$, then there exists a
vector representation of points in $S$ that
\begin{enumerate}
\item
Constructs each vector of $O(1)$ words in $O(1)$ time.
\item
Allows the $\ell_2$ distance between any point pair $p,q \in S$ to be computed
in $O(1)$ time.
\end{enumerate}
\end{lemma}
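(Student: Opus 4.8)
The plan is to pack an entire $d$-dimensional vector into a single machine word (or $O(1)$ words) by allocating a fixed-width field to each coordinate, and then to compute the squared Euclidean distance by a constant number of word operations that operate on all coordinate-fields in parallel. The key accounting is the choice of field width: if each coordinate is a $b$-bit number, then a coordinate difference $p_i - q_i$ fits in $b+1$ bits, its square fits in $2b+2$ bits, and the sum $\sum_{i=1}^d (p_i-q_i)^2$ fits in $2b+2+\log d$ bits. To carry out all $d$ squarings simultaneously via integer multiplication without cross-field interference, I would pad each coordinate into a field of width roughly $2b + \log d + O(1)$ bits, so the whole padded vector occupies $d\cdot(2b+\log d+O(1)) = O(bd + d\log d)$ bits. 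The hypothesis $b\cdot d^2 = O(\log n)$ gives $bd = O(\log n)$ and $d\log d = O(d^2) = O(\log n)$ (since $b\ge 1$), so the padded vector fits in $O(\log n)$ bits, i.e. $O(1)$ words on a word-RAM with word size $\Theta(\log n)$. Constructing this representation from the raw coordinates is a matter of shifts and masks, $O(1)$ time per vector once the field width is computed; this handles part~1.

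For part~2, the standard trick is: lay out the coordinates of $p$ in one word with gaps, and the coordinates of $q$ in a second word aligned the same way; a single subtraction (after adding a suitable ``bias'' constant to prevent borrows from propagating across fields, since differences can be negative) yields a word holding all the $p_i-q_i$ in their fields. Then one computes $\sum_i (p_i-q_i)^2$ by the well-known word-level convolution idea: placing the difference-vector both as the multiplier and (in reversed/appropriately-shifted order) as the multiplicand of a single integer multiplication causes the ``diagonal'' term $\sum_i (p_i-q_i)^2$ to appear in a single field of the product, provided every field is wide enough ($\ge 2b+2+\log d$ bits) to hold that sum without overflow and provided the off-diagonal cross terms land in disjoint fields so they can be masked off. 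Extracting that field by a shift-and-mask gives $\sum_i (p_i-q_i)^2$ exactly in $O(1)$ time; taking an integer square root (or simply returning the squared distance, which is all the oracle needs, or using a precomputed table since the value is $O(\text{poly} n)$) yields $\|p-q\|_2$.

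The main obstacle — and the reason for the $d^2$ (rather than $d$) in the hypothesis — is controlling the cross terms in the single multiplication. A naive packing would let the off-diagonal products $2(p_i-q_i)(p_j-q_j)$ collide with and corrupt the diagonal field; to keep all $O(d^2)$ pairwise products in separate, individually-recoverable sub-fields one must inflate the field width by an extra $\Theta(\log d)$ bits and space the two copies of the difference vector so that term $(i,j)$ occupies position $i+j$, which forces the vector layout to span $\Theta(d)$ fields of width $\Theta(b+\log d)$, hence $\Theta(bd + d\log d)$ bits for one copy and the multiplication produces $\Theta(bd+d\log d)$ bits as well — all still $O(\log n)$ under the hypothesis. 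I would spell out exactly which shift amounts isolate the diagonal and verify the no-overflow inequality $2b+2+\log d \le$ field width; everything else is routine shift/mask/add bookkeeping. If negative intermediate values are awkward, an alternative is to split each coordinate difference by sign, or to add a global offset to all coordinates up front so differences, after biasing, stay nonnegative within each field — I would pick whichever makes the overflow bookkeeping cleanest.
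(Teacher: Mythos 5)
Your proposal is correct in substance, but it takes a genuinely different route from the paper's proof. The paper avoids coordinate differences (and hence any negativity issues) entirely: it reduces the problem to computing the inner product $\sum_i p_i q_i$, packs $p$ into a word $u^p$ with narrow fields of width $r=2b$ and $q$ into a word $v^q$ with wide fields of width $r'=2bd$, forms a single product $u^p\times v^q$ so that each diagonal term $p_iq_i$ lands isolated at position $i(r+r')$, masks away all other bits, and then uses a \emph{second} multiplication by a fixed sparse constant to gather the sum $\sum_i p_i q_i$ into one field. This costs $O(bd^2)$ bits, exactly matching the hypothesis. You instead pack both operands with a uniform field width of $\Theta(b+\log d)$, form the biased difference vector by one subtraction, and use the reversal/convolution trick (multiplying the difference vector against a reversed copy) to land the entire sum $\sum_i(p_i-q_i)^2$ in one field of the product. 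That costs only $O(bd+d\log d)$ bits, so your route actually proves a slightly stronger statement than the lemma requires; the price is the extra bookkeeping for negativity (bias constant $B$, then subtract $2B\sum_i d_i - dB^2$, with $\sum_i d_i$ obtained by a separate word summation) and the need to precompute, at insertion time, both a forward and a field-reversed packing of each point so that the reversed difference vector is available in $O(1)$ word operations.

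One small slip worth fixing: you write that term $(i,j)$ should occupy position $i+j$, but that is the layout for a forward$\times$forward product, in which the diagonal terms $(i,i)$ land at \emph{distinct} positions $2i$ and would still need gathering. With the reversed copy (second operand carries $d_j$ in field $d-1-j$), term $(i,j)$ lands at position $(d-1)+(i-j)$, so all diagonal terms coincide at field $d-1$ and the off-diagonals fall strictly to its left or right; that is precisely what lets a single mask-and-shift extract the sum. Adjust the position arithmetic accordingly, and verify the field-width inequality $2(b+1)+\lceil\log d\rceil\le$ field width so that the up-to-$d$ collisions in each off-diagonal field do not overflow; with that, the argument is complete.
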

\begin{proof}
Let $p_i$ be the $i$-th coordinate of $d$-dimensional point $p \in S$, and
recall that the $\ell_2$ distance between two points $p,q \in S$ is
defined as
$\|p-q\|
=\sum_{i=0}^{d-1}(p_i-q_i)^2
=\sum_{i=0}^{d-1}p_i^2
-\sum_{i=0}^{d-1}p_iq_i
+\sum_{i=0}^{d-1}q_i^2$.
It suffices to show
that there exists a vector representation for all points $p,q \in S$
that occupies $O(1)$ words per point and
allows the sum $\sum_{i=0}^{d-1}p_iq_i$ to be computed in $O(1)$ time.

Assume without loss of generality that $d$ is a power of 2, and
for the sake of simplicity, assume that $4b \cdot d^2 \le
\lceil \log n \rceil$, so that all operations below can be done on a
single word. We pad each vector with $d$ additional coordinates (each of $b$ bits all set
to $0$), resulting in $2d$-dimensional vectors.

For each point $p \in S$, we create two vectors $u^p$ and $v^p$. Vector
$u^p$ is constructed as follows. Every coordinate of $p$ is
stored at the rightmost position of a range of $r=2b$ bits, with coordinate
$p_i$ stored in bits $[ir,\ldots,(i+1)r-1]$ for all $0 \leq i < 2d$
(numbered from the right end as usual), with all unused bits set to $0$. Vector $v^p$ is constructed as
follows. Every coordinate in $p$ is stored in the
rightmost position in a range of $r'= 2b \cdot d$ bits, with coordinate $p_i$ stored
in bits $[ir',\ldots,(i+1)r'-1]$, with all unused bits set to $0$.

Now take points $p,q \in S$, and compute in $O(1)$ time the
product $w = u^p \times v^q$. Note that for all $i$, $p_iq_i$ is found in
$r$ consecutive bits beginning at position $i(r'+r)$ of $w$. Set to $0$
all bits of $w$ that do not correspond to a product $p_iq_i$, that is
all bits not in the range $[i(r'+r),i(r'+r)+r]$ for all $i$. (This can be done using bitwise AND with a fixed number.) We are left
with vector $w$ that contains exactly one copy of each product $p_iq_i$.
It remains only to sum these entries in $O(1)$ time.

To this end, let $x$ be a vector that has a
$1$ in the $i(r'+r)$-th bit for every $0\leq i < 2d$
and $0$ elsewhere. Let $y = w \times x$. The sum of the entries of $w$ is
found in $2b +\log d$ bits beginning at position $(r'+r)(d-1)$ of $y$.
\end{proof}

\subsection{Dynamic jump tree}\label{sec:jump-tree}

In this section, we will describe a dynamic structure that supports jump queries.
The compressed hierarchy tree $T$ was described in Section~\ref{sec:backbone}. We now describe
$k$-jump queries on the tree $T$.

\begin{definition}
A $k$-jump query  on compressed hierarchy tree $T$ provides two explicit tree
nodes, $u \in Y_{5^l}$ and its ancestor $w \in Y_{5^p}$. Let $m$ be the largest value less
than $p$ which is a multiple of $k$. The query requests the node
$v \in Y_{5^m}$ that is ancestral to $u$; if $v$ is implicit then its lowest
explicit ancestor is requested instead.
\end{definition}

The existence of a dynamic structure supporting jump queries would allow us to descend $T$
via jumps. 

\begin{lemma}\label{lem:jump}
For fixed $k$, a structure that supports $k$-jump queries of hierarchy tree $T$ can
be maintained along with $O(k)$ work per insertion to $T$ and $O(|T|)$ space.
\end{lemma}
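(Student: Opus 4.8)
The plan is to build a secondary tree $T^{(k)}$ whose nodes are exactly the explicit nodes of $T$ lying in levels that are multiples of $k$ (call these the \emph{$k$-levels}), plus all explicit nodes whose level is \emph{not} a multiple of $k$ but which are the lowest explicit ancestor of some node, i.e. nodes that survive compression; actually it is cleaner to keep in $T^{(k)}$ every explicit node of $T$ but additionally record, for each explicit node $u$, a pointer $\pi_k(u)$ to the answer of the $k$-jump query issued from $u$ with $w$ taken to be the root. The query on arbitrary $(u,w)$ then reduces to following $\pi_k(u)$ and, if the level $m$ of $\pi_k(u)$ is not below $p$, climbing a bounded number of additional $k$-level links — but since $m$ is the largest multiple of $k$ strictly below $p$, and $\pi_k(u)$ points to the largest multiple of $k$ below the root, we must be able to \emph{jump within} the chain of $k$-level ancestors. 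So the real structure I want is a linked list, for each explicit node, through its explicit ancestors at $k$-levels, together with enough auxiliary pointers to index into this list by level in $O(1)$ time.

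Here is the construction I would carry out. First, binarize $T$ (as already used in Section~\ref{sec:backup1}) so that each node has $O(1)$ children; this costs $O(1)$ amortized — and by the standard lazy rebalancing, worst-case — work per insertion and does not change asymptotic space. Next, for each explicit node $u$ maintain a pointer $\mathrm{jmp}(u)$ to the lowest explicit ancestor of $u$ whose $T$-level is a multiple of $k$; when $u$ is inserted as a leaf (or as an internal node, in which case its single existing child $u'$ is re-linked beneath it) this pointer is computed in $O(k)$ time by walking up at most $k$ levels of $T$ from $u$ until a $k$-level is reached, charging the walk to the insertion. Then the $k$-level ancestors of $u$ form a path $u_0 = \mathrm{jmp}(u), u_1 = \mathrm{jmp}(\text{parent of }u_0\text{'s point one level up}), \ldots$ running to the root; I maintain this as a persistent-style pointer path and, to allow $O(1)$ indexing into it by target level $m$, I use the level-ancestor-on-insertions-only technique (valid here because these $k$-level paths only grow at the top as the hierarchy grows upward, never requiring insertion of an internal node into the $k$-level path). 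A $k$-jump query $(u,w)$ with $\lvl(w)=p$ is answered by: compute $m = k\lfloor (p-1)/k\rfloor$, follow $\mathrm{jmp}(u)$ to get onto the $k$-level path, then do an $O(1)$ level-ancestor query on that path for level $m$, and return the stored node (or its lowest explicit ancestor in $T$ if that $k$-level node is implicit — but by construction $\mathrm{jmp}$ already skips to explicit ancestors, so this is automatic).

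The main obstacle I expect is maintaining, under insertion of \emph{internal} nodes into $T$ (which Section~\ref{sec:prelim} flags as exactly what breaks off-the-shelf level-ancestor structures), the indexed $k$-level ancestor paths in $O(1)$ worst-case per $T$-insertion. The key observation that rescues this is that inserting an internal node $x$ into hierarchical level $Y_{5^l}$ affects the $k$-level structure only if $l$ is itself a multiple of $k$ (otherwise $x$ is not on any $k$-level path and only its own $\mathrm{jmp}$ pointer, computed in $O(k)$ time, is created); and when $l$ is a multiple of $k$, $x$ is inserted into the $k$-level path of its descendants, but within \emph{that} path $x$ is being inserted between a node and its parent — i.e. again an internal-node insertion into a tree for which we need level ancestors. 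I resolve this by one more level of the same indirection used in Section~\ref{sec:backup1}: partition each $k$-level path into blocks of size $O(\log n)$, keep a skeleton over block-heads, and on internal-node insertion into the path only the local block is touched (rebuilt in $O(\log n)$ time, amortized $O(1)$ by block-size slack, or worst-case $O(1)$ by the standard lazy-rebuild with enlarged block-size bounds $2^i \le \text{size} < 3\cdot 2^{i+1}$), while the skeleton itself only ever gets leaves appended at its top. Correctness of the answer is immediate from the definition of $k$-jump together with the fact, already established in Section~\ref{sec:backbone}, that compression only contracts unary paths, so the lowest explicit ancestor at or below level $m$ is well defined and is precisely what $\mathrm{jmp}$ composed with the indexed path lookup returns. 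The space bound $O(|T|)$ follows because each explicit node stores $O(1)$ pointers and participates in exactly one block of one $k$-level path, and the $O(k)$ work per insertion is dominated by the initial $\mathrm{jmp}$-pointer walk.
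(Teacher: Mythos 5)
Your construction diverges substantially from the paper's, and I believe there is a genuine gap. You correctly identify the obstacle — internal-node insertions break standard dynamic level-ancestor structures, and an insertion at a $k$-level is exactly such an internal insertion in the $k$-level tree — but your proposed fix does not resolve it. You partition $k$-level paths into $O(\log n)$-size blocks and claim the skeleton ``only ever gets leaves appended at its top.'' That claim is unjustified: a block that overflows must split, and the resulting new block-head is inserted \emph{between} an existing skeleton node and its skeleton children, which is again an internal-node insertion into the skeleton — the very operation you cannot afford. Pushing the problem up one level of indirection does not eliminate it. Separately, even granting the skeleton a working level-ancestor index, locating the target $k$-level node requires scanning inside the final $O(\log n)$-size block, so your query is $O(\log n)$, whereas the jump query must run in $O(1)$ time (the static oracle's Step~2, for instance, issues ``a series of $\log\log n$-jump queries'' and expects $O(1)$ total).

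The paper avoids level-ancestor machinery altogether. It builds a tree $T'$ whose nodes are copies of the $T$-nodes at $k$-levels (plus pointers from compressed $k$-level nodes to their lowest uncompressed ancestors), maintains it in $O(k)$ work per $T$-insertion, and answers a $k$-jump query with a \emph{single LCA query} using the extension of Lemma~\ref{lem:lca-child}: given $u'$ (lowest $k$-level ancestor of $u$) and a sibling $v'$ of $u'$'s ancestor under $w'$, the LCA query returns $w'$ together with the child $u''$ of $w'$ on $u'$'s side, and $u''$ is precisely the desired node. Because Cole--Hariharan's dynamic LCA structure already supports $O(1)$-time internal-node insertion, the entire internal-insertion difficulty you are fighting simply does not arise. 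The observation that ``jump'' is really ``child-of-LCA in a coarsened tree,'' and that dynamic LCA (unlike dynamic level-ancestor) tolerates internal insertions, is the key idea your proof is missing.
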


Before presenting a proof of Lemma~\ref{lem:jump}, we first need a preliminary lemma
that extends the dynamic LCA structure of Cole and Hariharan \cite{CoHa05}.

\begin{lemma}\label{lem:lca-child}
For any tree $T$, there exists an LCA query structure that supports insertion of
leaves and internal nodes to $T$ in $O(1)$ time, and answers the following query in
$O(1)$ time: given nodes $u,v \in T$, return $w = \lca(u,v) \in T$ as well as the
children $u',v' \in T$ of $w$ that are the respective ancestors of $u$ and $v$.
\end{lemma}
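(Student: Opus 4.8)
The plan is to extend the dynamic LCA structure of Cole and Hariharan~\cite{CoHa05} so that, in addition to returning $w = \lca(u,v)$, it also returns the two children of $w$ on the paths to $u$ and $v$. The core idea is that the Cole--Hariharan structure already maintains an Euler-tour-style linearization of $T$ together with auxiliary structures that answer LCA in $O(1)$ time under leaf and internal-node insertions. Once $w$ is known, the child of $w$ ancestral to $u$ is simply the level-ancestor of $u$ at depth $\mathrm{depth}(w)+1$; the difficulty is that we cannot use a generic level-ancestor structure, both because the excerpt notes that dynamic level-ancestor queries do not support internal-node insertion, and because invoking such a structure is exactly what we are trying to avoid. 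So the real work is to get this \emph{one} particular level-ancestor query --- a jump of exactly one level below a node already known to be an ancestor --- for free from the LCA machinery.

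First I would recall the mechanism by which Cole--Hariharan compute the LCA: they maintain each node's position in an Euler tour (or a balanced-parentheses / interleaved-ordering structure), and $\lca(u,v)$ is found by a range-minimum-type query over the depth sequence between the occurrences of $u$ and $v$. The key observation I would exploit is that the node realizing this minimum depth is reached, in the tour, immediately after leaving the subtree of the $u$-side child of $w$ (or immediately before entering the subtree of the $v$-side child). Concretely, in the Euler tour the child $u'$ of $w$ on the path to $u$ is the node whose subtree interval $[\mathrm{in}(u'),\mathrm{out}(u')]$ contains $\mathrm{in}(u)$ and whose parent is $w$; since the tour descends from $w$ to $u'$ with no intermediate node of depth $\le \mathrm{depth}(w)$, the first occurrence of depth $\mathrm{depth}(w)+1$ at or after $\mathrm{in}(w)$ on the way toward $u$ is exactly an occurrence of $u'$. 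I would augment the range-minimum blocks with a second query mode that, given a target depth one more than the block minimum and a direction, returns the node attaining it; this is another $O(1)$ RMQ-style lookup over the same precomputed tables, and under the Cole--Hariharan dynamization it is maintained with the same $O(1)$ amortized (deamortized) update cost, since it touches the same blocks as the existing LCA query.

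The second piece is handling internal-node insertions, which is what rules out off-the-shelf level-ancestor structures but which the Cole--Hariharan structure already accommodates for LCA: inserting an internal node splits an Euler-tour segment and updates a bounded number of blocks, and our added query mode reads only those same blocks, so no new update machinery is needed. I would then assemble the lemma: run the (augmented) LCA query to get $w$ together with the two occurrences of depth $\mathrm{depth}(w)+1$ flanking the minimum-depth witness, map those occurrences back to nodes via the tour-position-to-node table, and return them as $u',v'$. Correctness follows from the structural fact above; the running time and space follow because every added operation is $O(1)$ extra work on structures already maintained in $O(1)$ time and $O(|T|)$ space.

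The main obstacle I anticipate is the case analysis at the boundaries: when $u$ or $v$ \emph{is} a child of $w$ (so $u' = u$ or $v' = v$), when $u$ is an ancestor of $v$ or vice versa (so $w$ is one of them and one ``child'' is undefined or must be reported as null), and degenerate occurrences in the Euler tour when nodes were just inserted and blocks are mid-rebalancing. I would dispatch these by first comparing depths of $u,v,w$ to detect the ancestor/descendant cases explicitly, and otherwise arguing that the minimum-depth witness strictly between $\mathrm{in}(u)$ and $\mathrm{in}(v)$ has depth exactly $\mathrm{depth}(w)$ and is flanked on each side by an occurrence of the corresponding child at depth $\mathrm{depth}(w)+1$. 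None of this changes the asymptotics, so the bottleneck is purely in making the boundary bookkeeping airtight rather than in any new algorithmic idea.
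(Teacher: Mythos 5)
Your route is genuinely different from the paper's, and it trades a clean reduction for an internals-level modification that you have not fully justified. The paper's proof is a black-box construction: it builds two auxiliary binary trees $T_1$ and $T_2$ from $T$ via the standard left-child/right-sibling encoding (with $T_2$ reversing the sibling order), maintains the Cole--Hariharan LCA structure on each, and observes that $\lca_{T_1}(u,v)$ is precisely the child $u'$ of $w$ on the path to $u$ (when $u'$ precedes $v'$ among $w$'s children), while $\lca_{T_2}(u,v)$ is $v'$. Updates to $T$ translate to $O(1)$ updates to $T_1$ and $T_2$, so the dynamization is inherited unchanged and no knowledge of the internals of \cite{CoHa05} is required. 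Your approach instead opens up the LCA structure, models it as an Euler tour with range-minimum blocks, and adds a ``leftmost/rightmost occurrence of the minimum in the query interval'' mode so that the flanking positions yield $u'$ and $v'$. The structural observation you rely on is correct --- the positions immediately adjacent to the first and last occurrences of $w$ inside $[\mathrm{in}(u),\mathrm{in}(v)]$ do hold $\mathrm{out}(u')$ and $\mathrm{in}(v')$ --- but the claim that this is ``another $O(1)$ RMQ-style lookup over the same precomputed tables'' and that Cole--Hariharan's deamortized $O(1)$ update bound carries over for free is asserted rather than argued. The data structure in \cite{CoHa05} is considerably more intricate than a plain Euler tour with block RMQ, an off-the-shelf RMQ returns \emph{some} minimum position rather than the leftmost or rightmost one, and supporting the refined query while preserving the worst-case $O(1)$ updates for internal-node insertion would need to be re-derived from scratch. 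The paper's tree-transformation route sidesteps all of this; if you keep your approach, the burden is to verify that the maintained encoding genuinely supports the directional minimum query under the same update machinery.
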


\begin{proof}
Given tree $T$, we create a new tree $T_1$ as follows. Let $r$ be the root of $T$
and let $v_0,...,v_f$ be $r$'s ordered children. The root of the tree $T_1$ is $r$.
$r$'s left child is $v_0$, and for all nodes $1\leq i \le f$ we have that $v_i$
is the right child of $v_{i-1}$. We then recursively build the subtrees of each child
node $v_i$. This tree can be maintained in $O(1)$ time for each update to $T$.
Now consider nodes $u,v \in T$ that have $w = \lca(u,v) \in T$, and consider the nodes
$u',v' \in T$ that are children of $w$ and the respective ancestors of $u,v \in T$.
Assume that $u'$ precedes $v'$ in the ordering of the children of $w \in T$. Then by
construction, an LCA query on $u,v \in T_1$ returns $u' \in T_1$.

It remains to identify $v'$. To this end, we create tree $T_2$ as follows:
The root of the tree $T_2$ is $r$. $r$'s left child is $v_f$, and for
$0\leq i < f$ we have that $v_i$ is the right child of $v_{i-1}$. We then
recursively create the subtrees of $r$'s children $v_i$.
Now consider nodes $u,v,w,u',v' \in T$ mentioned above. By construction, an LCA query
on $u,v \in T_2$ returns $v' \in T_2$.
\end{proof}

We can now proceed in the proof of Lemma~\ref{lem:jump}.

\begin{proof}
Let tree $T'$ preserve every $k$-th level of $T$. We build $T'$ from $T$ as follows: Level
$T'_j$ contains a copy of every uncompressed node of level $T_{j \cdot k}$,
$j=0,\ldots,\infty$. Further, $T'_j$ contains a copy of every compressed node $x$ in $T_{j
\cdot k}$ that has its lowest uncompressed ancestor $y$ in some level below $T_{j \cdot
(k+1)}$, and $x$ is given a pointer to $y$. This can easily be done in $O(k)$ time per
tree update. (Note that the compression scheme implies that $u$ is the only descendant of
$v$ in level $T_{j \cdot k}$.) The ancestor-descendant relationship in $T'$ is defined by
the anscestor-descendant relationship in $T$.

Now given a $k$-jump query for nodes $u,w \in T$, we first locate the lowest respective
ancestors $u',w'$ of $u,w$ whose tree level is divisible by $k$. (This information can be
maintained for each node in $O(k)$ time.) The LCA query of Lemma~\ref{lem:lca-child} on
$u',v' \in T'$, where $v'$ is a child of $w' \in T'$ which is not an ancestor of $u' \in
T'$, returns $w'$ as well as the child $u'' \in T'$ of $w' \in T'$. $u'' \in T$ (or if it
is compressed, its lowest uncompressed ancestor) is the desired node.
\end{proof}

\subsection{Dynamic embeddings}\label{sec:dyn-embed}

Here we present two randomized dynamic embeddings for an $n$-point metric space $(S,d)$
with doubling dimension $\lambda$. Both embeddings store $O(n)$ interpoint distances
and each can be maintained in time
$2^{O(\lambda)} \min \{\log n, \log \alpha \} + l^{O(\lambda)}$ per update (where $l\ge 5$
is a parameter specific to each embedding).
\begin{itemize}
\item
The first embedding is into a tree metric, with $l=O(\lambda^2)$. Let $T$ be the target space
of the embedding. Given two points $x,y \in S$, we show that $d_T(x,y) \ge d(x,y)$ (that is,
the embedding is non-contractive), and that $d_T(x,y) \ge [O(\lambda)]^i d(x,y)$ with
probability at most $(4/5\lambda)^{i}$ (for any positive integer $i$).
\item
The second embedding is a snowflake embedding into $\ell_2$, with $l=O(1)$. Let $E$ be the
target space of the embedding. Given two points $x,y \in S$, we show that
$\frac{\|f(x)-f(y)\|_2}{d(x,y)^{1/2}} \le 1$ (that is, the embedding is non-expansive to
the snowflake), and that $\frac{\|f(x)-f(y)\|_2}{d(x,y)^{1/2}} > 2^{-11}/\lambda$
with high probability.
\end{itemize}

Both embeddings are build upon the hierarchy of \cite{CoGo06}, after a new
assignment of parent-child relationships to the hierarchical points.

\paragraph{Parent-child assignment.}
We restrict ourselves to consider each $\lceil \log_5 l \rceil$-th level in
the hierarchy. (For ease of presentation, we
will henceforth assume that $l$ is a power of 5.) With regards to this
{\em restricted hierarchy}, a repeated application of the covering property
gives that every point in level $H_{5^{i \log_5 l}} = H_{l^i}$ is within
distance $\frac{4}{5}l^{i+1}$ of some point in level $H_{l^{i+1}}$, and this
constitutes the covering property for the restricted hierarchy.

Let $x \in H_{l^i}$ be a newly inserted point occurrence in the hierarchy.
As in \cite{AbBaNe08}, we associate with $x$ a radius
$r_x \in [l^i,2l^i]$, where $r_x$ is a random
variable sampled from a truncated exponential density function:
The density function is $f(r) = \frac{\lambda^8}{1-\lambda^{-8}} \rho e^{-\rho r}$
with parameter $\rho = 2 \ln (\lambda^4)/r$ when
$r \in [l^i,2l^i]$, and is $f(r)=0$ elsewhere. (This is the construction presented
in \cite{AbBaNe08} with parameter $\Delta=4l^i$.)
Then $x$ is the parent of all subsequently inserted point occurrences in level
$H_{l^{i-1}}$ within distance $r_x$ of $x$, unless those points are within
the radius $r_y$ of a point $y \in H_{l^i}$ that was inserted before $x$.
This defines the parent-child relationship in the restricted hierarchy.

The hierarchy stores $O(n)$ interpoint distances, and can be maintained in
$2^{O(\lambda)} \min \{\log n, \log \alpha \} + 2^{O(\lambda \log \lambda)}$ update
time.

\subsubsection{Tree embedding}\label{sec:dyn-tree}

Here, we present a dynamic embedding of $S$ into a tree metric. We use
the hierarchy and parent-child relationships delineated above, with
$l=37\lambda^2$. We extract a randomized tree from the hierarchy as follows:
For each point occurrence in the restricted hierarchy, there exists a
single corresponding node in the tree. Hence, the parent-child
relationship among the restricted hierarchical points
immediately defines a parent-child relationship in the corresponding tree,
where an edge connect a parent to its child.
From the randomized tree, we extract a tree metric $d_T(\cdot,\cdot)$ by
assigning a length to each edge: An edge rooted at level $H_{d^i}$ is
assigned length $(4l)^i$. We have the following lemma:

\begin{lemma}\label{lem:tree-embed}
For any two points $x,y \in S$ and positive integer $i$, where $l=37\lambda^2$,
\begin{itemize}
\item
$d_T(x,y) \ge d(x,y)$.
\item
$\Pr[\frac{d_T(x,y)}{d(x,y)} > \frac{32}{5}(4l)^{i+1}] \le (\frac{4}{5\lambda})^{i}$.
\end{itemize}
\end{lemma}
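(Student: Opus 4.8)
The plan is to prove the two claims separately, using the structure of the restricted hierarchy and the randomized parent-child assignment described above.

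\textbf{Non-contraction.} For the first claim, $d_T(x,y) \ge d(x,y)$, I would argue that the tree path between $x$ and $y$ passes through their lowest common ancestor $z$ in some level $H_{l^j}$, and the length of each of the two half-paths (from $x$ up to $z$, and from $y$ up to $z$) is dominated by the length of its top edge, which is $(4l)^j$ (since edge lengths grow geometrically by a factor of $4l \ge 5$ per restricted level, the sum of a downward chain is at most $\frac{4l}{4l-1}(4l)^j < \frac{5}{4}(4l)^j$, or one can be even cruder). Meanwhile, the covering property of the restricted hierarchy bounds $d(x,z)$ and $d(y,z)$: an ancestor in level $H_{l^j}$ is within distance roughly $r_z \le 2l^j$ of the child, and repeated covering from $x$ up to level $j$ gives $d(x,z) \le \sum_{i \le j} 2 l^i < \frac{5}{2} l^j \ll (4l)^j$. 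So $d(x,y) \le d(x,z) + d(z,y)$ is small compared to $d_T(x,y)$, and with the constants chosen ($l = 37\lambda^2$, edge length $(4l)^i$) this inequality holds outright. The main content here is just to check that the geometric gap between $l^j$ and $(4l)^j$ — a factor of $4^j$ — absorbs the constant-factor slack in the two telescoping sums; this is routine.

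\textbf{Tail bound on expansion.} For the second claim I would first identify the level $j^\ast$ with $l^{j^\ast - 1} \le d(x,y) < l^{j^\ast}$, so by the covering argument $x$ and $y$ certainly share an ancestor within $O(1)$ restricted levels above $j^\ast$; call the "good" threshold level $H_{l^{j^\ast + c_0}}$ for a small absolute constant $c_0$. If the lowest common ancestor $z$ of $x$ and $y$ lies in level $H_{l^{j^\ast + c_0 + i}}$ or below, then $d_T(x,y) \le \frac{5}{2}(4l)^{j^\ast + c_0 + i} \le \frac{5}{2}(4l)^{c_0+1} (4l)^{j^\ast - 1} \cdot (4l)^i \le \frac{32}{5}(4l)^{i+1} d(x,y)$ after absorbing constants into the exponent shift, so it suffices to bound $\Pr[\text{lca level} \ge j^\ast + c_0 + i]$ by $(4/5\lambda)^i$. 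The lca of $x$ and $y$ sits in level $H_{l^{j^\ast+c_0+k}}$ only if at level $H_{l^{j^\ast+c_0+k-1}}$ the (already-merged) ancestors of $x$ and $y$ are still assigned to different parents one level up. Since at that level the two ancestors are within distance $O(l^{j^\ast+c_0+k-1})$ of each other (telescoping covering bound on $d(x,y) < l^{j^\ast}$ plus the two ascent distances), and each parent in level $H_{l^m}$ "captures" every not-yet-captured point within its random radius $r_x$ drawn from the truncated exponential with parameter $\rho = 2\ln(\lambda^4)/r$ on $[l^m, 2l^m]$, the probability that the ball-cutting separates two points at distance $\Delta$ is at most roughly $\rho \Delta$ — exactly the standard Calinescu–Karger–Räcke / \cite{AbBaNe08} padded-decomposition estimate. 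Here $\rho = \Theta(\log\lambda / l^m)$ and $\Delta = O(l^{m-1})$, so each level contributes a separation probability of $O(\log\lambda / l) = O(\log\lambda/(37\lambda^2)) \le 4/(5\lambda)$ with room to spare. The events at successive restricted levels compound multiplicatively (conditioning only makes the pairwise distance smaller at higher levels, which only helps), so the probability of surviving $i$ consecutive levels without merging is at most $(4/5\lambda)^i$.

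\textbf{Main obstacle.} The delicate point is the per-level separation estimate: one must verify that the truncated-exponential ball-growing of \cite{AbBaNe08}, when invoked with $\Delta = 4l^i$ and the online (first-inserted-wins) tie-breaking rule, still yields the clean bound $\Pr[x,y \text{ in different parents}] \le O(\rho\, d(x,y))$ — the online assignment means the "parent" that captures a point need not be the closest center, so one needs the inequality to hold in the form "some earlier center's ball contains exactly one of $x,y$," which is where the standard union-bound-over-an-ordered-sequence argument of padded decompositions is used. One also has to be careful that the pairwise distance $d(x',y')$ between the lifted ancestors does not grow across levels faster than $l^m$ shrinks it relative to the radius scale; the telescoping covering bound handles this, but the constants must be tracked to land exactly at $\frac{32}{5}(4l)^{i+1}$ and $(\frac{4}{5\lambda})^i$ with the choice $l = 37\lambda^2$. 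Everything else is bookkeeping with geometric series.
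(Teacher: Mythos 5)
Your outline matches the paper's proof closely in both parts. The non-contraction argument is the same: bound $d_T(x,y)$ from below by the two top edges (each of length $(4l)^j$ at the LCA level $j$), bound $d(x,y)$ from above by telescoping the per-level radii $\le 2l^{j'}$, and observe that the $4^j$ gap absorbs the slack. The tail bound is also structurally identical to the paper: fix the scale $k$ with $l^{k-1} < d(x,y) \le l^k$, bound the distance between the lifted ancestors at each level $m \ge k$ by $O(l^m)$, invoke the \cite{AbBaNe08} truncated-exponential ball-partition at scale $l^{m+1}$ to get a per-level non-merging probability $\le 4/(5\lambda)$, and compound multiplicatively across $i$ levels (justified, as you note, because the inter-ancestor distance bound is deterministic and the radii at distinct levels are independent).

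One substantive inaccuracy in your per-level estimate is worth flagging. You assert that the probability of separating two points at distance $\delta$ under the radius-$r$ exponential cut is ``roughly $\rho\delta$,'' i.e.\ $\Theta\!\left(\frac{\log\lambda}{l^m}\right)\delta$, and call this the standard padded-decomposition bound. That is not the form of the bound the paper uses: the \cite{AbBaNe08} estimate for doubling metrics, which the paper invokes as $4\lambda \cdot \frac{(37/5)l^m}{l^{m+1}}$, is $O(\lambda)\,\delta/\Delta$, not $O(\log\lambda)\,\delta/\Delta$. The $\lambda/\log\lambda$ discrepancy is exactly the contribution of the $\lambda^{O(1)}$ normalization factor in the truncated density (and of the ordered sequence of candidate centers), which your heuristic $\rho\delta$ silently drops. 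With $l = 37\lambda^2$ both forms happen to clear the $4/(5\lambda)$ threshold, so your final number is right, but it is right with a factor-$\lambda/\log\lambda$ of unearned slack: your $\rho\delta$ expression is an \emph{underestimate} of the separation probability, which is the wrong direction for an upper bound, and would give a false positive if the constant in $l$ were chosen more tightly. The per-level bound should be stated as $O(\lambda)\cdot(\text{ancestor distance})/l^{m+1}$, exactly as the paper does. Finally, your ``absorbing constants into the exponent shift'' step to land at $\frac{32}{5}(4l)^{i+1}$ requires comparing $(4l)^{k+i}$ against $d(x,y)\,(4l)^{i+1}$ using only $d(x,y) > l^{k-1}$; the paper performs the identical move, and in both cases the $4^{k-1}$ factor deserves more careful handling than a casual ``absorb,'' so you are in the same boat as the source on that point rather than supplying an independent fix.
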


\begin{proof}
Consider any two points $x,y \in S$, or equivalently the corresponding point occurrences
$x,y \in H_0$. We first show that the tree embedding is non-contractive: If $x$ and $y$
have their least common ancestor in level $i>0$ of the tree, then by construction
$d_T(x,y) > 2 \cdot 4^i l^i \ge 8l^i$, while
$d(x,y) \le 2 \sum_{j=1}^i 2 l^j < 8 l^i$.
Hence, the embedding is non-contractive.

Next, we derive a probabilistic upper bound on the expansion of the embedding: Let
$l^{k-1} < d(x,y) \le l^k$. Then the true distance between the hierarchical ancestors
of $x,y$ in level $H_{l^m}$, $m \ge k$, of the restricted hierarchy is less than
$d(x,y) + 2 \frac{8}{5} \sum_{j=0}^m l^j
< l^k + \frac{32}{5} l^m
\le \frac{37}{5} l^{m}$.
By the covering property of the restricted hierarchy, $x$ is covered by some point
$x' \in H_{l^{m+1}}$ for which $d(x,x') \le \frac{4}{5}l^{m+1}$, and so a simple
computation gives $y$'s covering point $y' \in H_{l^{m+1}}$ also falls within the radius
$r_{x'}$ of $x'$:
$d(x',y) \le d(x',x) + d(x,y) \le \frac{4}{5}l^{m+1} + \frac{37}{5} l^m \le l^{m+1}$.
Now, the probability that the respective ancestors of $x$ and $y$ in level $H_{l^m}$
do not share the same parent is bounded by
$4\lambda \frac{\frac{37}{5}l^m}{l^{m+1}} = \frac{4}{5\lambda}$
\cite{AbBaNe08}. Hence, the probability that $x$ and $y$ have their lowest common
ancestor at
level $H_{l^{k+i}}$ is bounded by $(4/5\lambda)^{i}$, in which case
$d_T(x,y)
\le 2 \frac{8}{5} \sum_{j=1}^{k+i} (4l)^j
< \frac{32}{5} (4l)^{k+i}
\le d(x,y) \cdot \frac{32}{5} (4l)^{i+1}$.
\end{proof}

\subsubsection{Snowflake embedding}\label{sec:dyn-snow}

In this section we give a dynamic Assouad style embedding \cite{Assouad83}, in which
for a given metric space $(S,d)$ we embed the snowflake $(S,d^{1/2})$ of the metric
into $\ell_2$ space. Our theorem can be viewed as a dynamic version of the theorem
of \cite{GuKrLe03,AbBaNe08}. (A similar embedding holds for $d^{\beta}$ with $0 <
\beta \le 1$ and for general target space $\ell_p$.) For simplicity we focus on the
probabilistic version of the theorem which bounds the distortion with constant
probability.

\begin{theorem}\label{thm:snowflake}
For any $n$ point metric space $(S,d)$ with doubling dimension $\lambda$, there
exists a non-expansive probabilistic embedding $f:S \mapsto E$,
$E \subset \ell_2^D$, that realizes the snowflake $(S,d^{1/2})$: For every pair
$x,y \in S$:
$$\Pr_{f:S\mapsto E}\left[
\frac{\|(f(x)-f(y)\|_2}{d(x,y)^{1/2}} < 2^{-11}/\lambda
\right] \leq e^{-D/16} .$$
Moreover, this construction can be computed dynamically with storage of $O(n)$ interpoint
distances and $2^{O(\lambda)} \min \{\log \alpha, \log n\}$ update time.
\end{theorem}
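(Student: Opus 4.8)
The plan is to follow the Assouad/Gupta--Krauthgamer--Lee construction as refined in \cite{AbBaNe08}, but to carry it out level-by-level on the restricted hierarchy of Section~\ref{sec:dyn-embed} so that the random choices that define the embedding are made incrementally as points are inserted. First I would fix the parameter $l=O(1)$ (so each restricted level spans $O(1)$ original levels) and recall the parent-child assignment with truncated-exponential radii $r_x\in[l^i,2l^i]$ already set up in Section~\ref{sec:dyn-embed}. The embedding $f$ will be a direct sum (after appropriate scaling per scale) of $D=O(\log n)$ independent ``one-dimensional'' random contributions; within each coordinate, for each restricted level $H_{l^i}$ we place the clusters induced by the parent-radius balls at that level, assign each cluster an independent random sign (or a random Gaussian/Rademacher label) times a bump of magnitude $\Theta(l^i)$, and use a Lipschitz ``partition-of-unity'' style function supported on each cluster so that the contribution is $1$-Lipschitz at scale $l^i$. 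Summing over scales and over the $D$ independent coordinates, with the standard geometric scaling $\sum_i (l^i)^2$-type bound, gives non-expansiveness to the snowflake: $\|f(x)-f(y)\|_2 \le d(x,y)^{1/2}$, which I would verify by the same telescoping computation used for the non-contractiveness bound in Lemma~\ref{lem:tree-embed} (only distances along the hierarchy and the covering property enter).

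Next I would prove the lower-bound (non-collapsing) part. Fix $x,y\in S$ with $l^{k-1}<d(x,y)\le l^k$. As in the proof of Lemma~\ref{lem:tree-embed}, the true distance between the ancestors of $x,y$ at restricted level $H_{l^m}$ is at most $\frac{37}{5}l^m$, so at the ``critical'' scale $m\approx k$ the two points lie in nearby clusters, and with the truncated-exponential radius the probability that a single coordinate separates them into different parent-balls at that scale is a constant bounded below (this is exactly the $4\lambda \cdot \frac{37/5\, l^m}{l^{m+1}}$-type estimate from \cite{AbBaNe08}, except now used as a lower bound on ``being cut''). Conditioned on being separated at that scale in a given coordinate, the signed bumps give a contribution of magnitude $\Omega(l^{k/2}\cdot\text{const})=\Omega(d(x,y)^{1/2}/\lambda)$ in expectation per coordinate, and the $2^{-11}/\lambda$ figure is what falls out after tracking the constants through the Lipschitz normalization and the cross-scale interference (the other scales contribute $O(d(x,y)^{1/2})$ and do not swamp the critical-scale signal because of the geometric decay). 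Since the $D$ coordinates are independent, a Chernoff/Hoeffding bound on the number of coordinates achieving the good event yields $\Pr[\|f(x)-f(y)\|_2/d(x,y)^{1/2} < 2^{-11}/\lambda] \le e^{-D/16}$, which is the stated inequality.

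Finally, for the dynamic claim I would argue that all of this is genuinely incremental: when a point $x$ is inserted into $H_{l^i}$, we sample its radius $r_x$ and its $D$ random cluster-labels once and for all, we update the parent-child relationship of the $2^{O(\lambda)}$ nearby points (only subsequently-inserted children attach to $x$, so no re-attachment of old points occurs, exactly as in Section~\ref{sec:dyn-embed}), and each affected point's embedding vector changes in only $O(D)=O(\log n)$ coordinates and in only $O(1)$ restricted levels' worth of bumps. Combined with the $2^{O(\lambda)}\min\{\log n,\log\alpha\}+2^{O(\lambda\log\lambda)}$ cost already charged for maintaining the hierarchy and neighbor links, the per-update cost is $2^{O(\lambda)}\min\{\log\alpha,\log n\}$ as claimed, and only $O(n)$ interpoint distances are ever stored (each point records distances to its $O(\lambda)$-neighbors). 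The main obstacle I anticipate is the lower-bound constant-tracking: one must make sure the ``cut at the critical scale'' event has probability bounded below by an absolute constant (using the truncated-exponential density's anti-concentration, not just its expectation), and that the summed interference from all other scales in the $\ell_2$ norm is controlled well enough to keep the surviving signal at $\Omega(d(x,y)^{1/2}/\lambda)$ rather than something smaller by a further $\lambda$ factor; getting $2^{-11}/\lambda$ rather than, say, $\lambda^{-2}$ is where the delicacy lies, and this is precisely where I would lean on the parameter choices and the calculations of \cite{AbBaNe08}.
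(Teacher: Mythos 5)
Your high-level plan matches the paper's: build on the restricted hierarchy and the truncated-exponential radii from Section~\ref{sec:dyn-embed}, define per-level, per-coordinate Lipschitz ``bump'' contributions with random signs, sum geometrically over scales, and apply a Chernoff bound over the $D$ independent coordinates. Where your proposal has a genuine gap, however, is in the one place the dynamic setting actually bites: the bump function itself. In the static construction (as in \cite{AbBaNe08}), the bump value at a point $x$ is driven by the distance from $x$ to the boundary of its cluster, i.e.\ by $d(x, X\setminus P_i(x))$, which is a function of the current point set. Under insertions, $X$ grows and this quantity can \emph{decrease}, so the embedding coordinate of an already-inserted point changes. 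Your write-up tacitly acknowledges this when you say ``each affected point's embedding vector changes in only $O(D)$ coordinates'' --- but if an existing point's image moves, the lower-bound probability claim you already committed to for earlier pairs need not survive; you would have to re-prove the bound after every insertion, which is exactly what a dynamic embedding is supposed to avoid. The paper's proof introduces a specific replacement $g_i(x) = \min\{r_i(v)-d(v,x),\ \min_{u\in U}(d(u,x)-r_i(u))\}$, where $v$ is the center of $C(l^i,x)$ and $U$ is the set of cluster centers near $v$ that were inserted \emph{before} $v$; since the radii and the set $U$ are frozen at insertion time, $g_i(x)$ --- and hence $f(x)$ --- never changes when later points arrive. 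The heart of the argument is then Claim~\ref{claim:distance-to-boundary}, which re-establishes the three properties (intra-cluster Lipschitzness, inter-cluster bound, and $g_i(x)\ge\rho$ with constant probability) needed to port the \cite{AbBaNe08} analysis verbatim.

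So: the skeleton of your argument is right, but without the insertion-invariant surrogate $g_i$ the ``genuinely incremental'' claim does not hold --- your partition-of-unity bumps would keep shifting as points arrive, and the one-shot probability statement over the random radii and signs breaks down. Everything else you wrote (parent assignment freezing, Chernoff over $D$ coordinates, the critical-scale separation estimate of the form $4\lambda\cdot\frac{37/5\, l^m}{l^{m+1}}$, the $1/\lambda$ loss coming from the $\tau^{-1}$ scaling with $\tau = \ln 2/(8\lambda)$) is consistent with the paper's proof; the missing piece is the realization that the distance-to-boundary function must itself be redefined to be a function of the \emph{assigned radii} rather than of the evolving point set, and the short argument that this redefinition preserves the Lipschitz and anti-concentration properties that Assouad-type proofs need.
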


Our embedding uses the same hierarchy and parent-child relationship presented above,
with $l=8$. Let $H_{l^i}$-cluster $C_x$ be composed of all descendants of $x \in
H_{l^i}$, and call $x$ the center of this cluster. It follows that each point is
found in $O(\log \alpha)$ clusters, one cluster for each level of the hierarchy. Let
$C(l^i,y)$ denote the $H_{l^i}$-cluster containing $y$.

As usual for the construction of snowflake embeddings, we shall construct the
embedding function $f$ by defining for each integer $1 \leq t \leq D$ a function
$f^{(t)}:X \rightarrow \mathbb{R}^{+}$, and then letting
$f = D^{-1/2} \bigoplus_{1 \leq t \leq D} f^{(t)}$.
Fix $t$, $1 \leq t \leq D$, and in what follows we will define $f^{(t)}$:
For each restricted hierarchical level $H_{l^i}$ we define a
function $f^{(t)}_i:S \rightarrow \mathbb{R}^{+}$, and for each point
$x \in S$, let $f^{(t)}(x) = \sum_i f^{(t)}_i(x)$.
Let $\{\sigma^{(t)}_i(C_x) | x \in H_{l^i} \}$ be i.i.d.\ symmetric
$\{0,1\}$-valued Bernoulli random variables.
Let $\tau = \ln 2/ (8 \lambda) \ge 2^{-4}/\lambda$.
The embedding is defined as follows: for each $x \in S$,
\begin{itemize}
\item For each $i$, let $f^{(t)}_i(x) =
\sigma^{(t)}_i(C(l^i,x)) \cdot l^{-i/2}\min\{\tau^{-1} \cdot
g_i(x),l^i\}$,
\end{itemize}
where $g_i(x)$ is a function which computes the distance from $x$
to the boundary of $C(l^i,x)$. This can be computed as follows. Let
$v$ be the center of $C(l^i,x)$ and let $U$ be the set of
$H_{l^i}$-cluster centers within distance $4l^i$ of $v$ which were
inserted into $S$ before the insertion of $v$. For $u$ in $U$
let $r_i(u)$ denote its associated radius. Then:
\begin{itemize}
\item $g_i(x) = \min \{ r_i(v) - d(v,x) , \min_{u \in U} (d(u,x) - r_i(u))
\}$.
\end{itemize}
The function $g_i(x)$ replaces the
expression $d(x,X\setminus P_i(x))$ used in embedding of
\cite{AbBaNe08}. (Note that $g_i(x)$ is not affected by the insertion of new points
into the hierarchy, and can be evaluated in time $2^{O(\lambda)}$.) The following
properties are needed to show that it can be replaced in their analysis:

\begin{claim}
\label{claim:distance-to-boundary}
For every $x,y \in X$:

\begin{itemize}
\item If $C(l^i,x) = C(l^i,y)$ then $|g_i(x) - g_i(y)| \leq d(x,y)$.
\item If $C(l^i,x) \neq C(l^i,y)$ then $\max \{ g_i(x),g_i(y) \} \leq d(x,y)$.
\item $g_i(x) \geq \rho$ with constant probability.
\end{itemize}
\end{claim}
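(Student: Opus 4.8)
The plan is to verify the three properties directly from the definition of $g_i$, treating $g_i(x)$ as the signed distance from $x$ to the boundary of its cluster $C(l^i,x)$, where the boundary is formed by the radius $r_i(v)$ of the center $v=v(x)$ and the radii $r_i(u)$ of the earlier-inserted competitor centers $u\in U$. The key observation is that $g_i(x) = \min\{r_i(v)-d(v,x),\,\min_{u\in U}(d(u,x)-r_i(u))\}$ is a minimum of functions each of which is $1$-Lipschitz in $x$ (either $d(v,\cdot)$ or $d(u,\cdot)$ with signs), and a pointwise minimum of $1$-Lipschitz functions is $1$-Lipschitz — this is essentially all of the first bullet. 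For the first bullet, if $C(l^i,x)=C(l^i,y)$ then $x$ and $y$ have the same center $v$ and, I claim, the same competitor set $U$ (since $U$ depends only on $v$ and the insertion history, not on the point); hence $g_i(x)$ and $g_i(y)$ are the same min-of-$1$-Lipschitz function evaluated at $x$ and $y$, giving $|g_i(x)-g_i(y)|\le d(x,y)$.

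For the second bullet, suppose $C(l^i,x)\neq C(l^i,y)$; I want $g_i(x)\le d(x,y)$ and symmetrically $g_i(y)\le d(x,y)$. Let $v=v(x)$ and $w=v(y)$ be the (distinct) centers. The point $y$ lies in $C(l^i,w)$ and not in $C(l^i,v)$, so $y$ is ``outside'' $v$'s cluster; the plan is to show that the relevant boundary term for $x$ is at most $d(x,y)$. One of two cases holds. If $w$ was inserted before $v$, then $w\in U$ (once I check $d(v,w)\le 4l^i$, which follows from the parent-child radii being at most $2l^i$ and both $x,y$ being within their radii of $v,w$ respectively together with $d(x,y)$ being small relative to $l^i$ in the regime where the claim is used — or, more robustly, by noting that if $d(v,w)>4l^i$ then $v$'s cluster and $w$'s cluster are far apart and $d(x,y)$ is already $\ge$ any boundary distance of interest), and then $g_i(x)\le d(w,x)-r_i(w)$; since $y$ is within radius $r_i(w)$ of $w$ we get $d(w,x)\le d(w,y)+d(y,x)\le r_i(w)+d(x,y)$, hence $g_i(x)\le d(x,y)$. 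If instead $v$ was inserted before $w$, then $v\notin$ the competitor set of $w$, so the fact that $y$ went to $w$ rather than $v$ means $y$ is outside radius $r_i(v)$ of $v$, i.e. $d(v,y)\ge r_i(v)$; then $g_i(x)\le r_i(v)-d(v,x)\le d(v,y)-d(v,x)\le d(x,y)$. The symmetric argument bounds $g_i(y)$. I would write this as a short case analysis; the main care is making sure the ``competitor within distance $4l^i$'' truncation does not cause a center to be missed — the clean way is to observe that $g_i(x)$ is only relevant (i.e.\ used in $f^{(t)}_i$ before the $\min\{\cdot,l^i\}$ truncation) when it is $\le l^i$, and any boundary witnessed by a center at distance $>4l^i$ would give $g_i(x)$ comparable to $l^i$, so it can be absorbed.

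For the third bullet, $g_i(x)\ge\rho$ with constant probability, the plan is to invoke exactly the estimate already used in the tree-embedding proof (Lemma~\ref{lem:tree-embed}): the associated radius $r_x$ (here $r_i(v)$) is drawn from the truncated exponential density of \cite{AbBaNe08} with parameter $\Delta=4l^i$, and the distance from a fixed point to the nearest cluster boundary exceeds a fixed fraction $\rho = 2^{-O(1)}l^i$ with probability bounded below by a constant, because the probability that the boundary cuts within distance $\delta$ of $x$ is $O(\lambda\delta/l^i)$ by the memorylessness/density bound on the exponential, and there are only $2^{O(\lambda)}$ competing centers within distance $4l^i$ by the packing property; summing a union bound over these and choosing $\rho$ a small enough constant multiple of $l^i/\lambda^{O(1)}$ leaves constant probability. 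The main obstacle I anticipate is purely bookkeeping: pinning down the exact constant $\rho$ (matching whatever \cite{AbBaNe08} needs downstream) and confirming that the $4l^i$-truncation of $U$ is consistent across $x,y$ in the same cluster and harmless across clusters; the Lipschitz and inequality arguments themselves are routine triangle-inequality manipulations.
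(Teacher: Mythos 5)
Your proposal follows the same overall template as the paper's: verify all three bullets directly from the formula $g_i(x)=\min\{r_i(v)-d(v,x),\,\min_{u\in U}(d(u,x)-r_i(u))\}$. For the first bullet your ``min of $1$-Lipschitz functions is $1$-Lipschitz'' observation is exactly the paper's argument, just stated more compactly; the paper picks the minimizing term for $g_i(y)$ and compares. For the third bullet both you and the paper invoke the Abraham--Bartal--Neiman radius-density estimate; the paper just cites it, while you sketch the union bound over the $2^{O(\lambda)}$ competitors. No substantive difference on either of those.

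The second bullet is where you genuinely diverge, and where your version has gaps. You split on whether $y$'s center $w$ was inserted before or after $x$'s center $v$. The paper instead runs one contrapositive argument that never mentions $w$: assume $d(x,y)<g_i(x)$; then every term in the $\min$ exceeds $d(x,y)$, and the triangle inequality gives $d(v,y)<r_i(v)$ and $d(u,y)\ge r_i(u)$ for every $u\in U$, forcing $y$ into $v$'s cluster, a contradiction. This automatically disposes of the $4l^i$-truncation worry: from $d(v,y)<r_i(v)\le 2l^i$, any center $u'$ with $d(u',v)>4l^i$ has $d(u',y)>2l^i\ge r_i(u')$, so $u'$ could not have captured $y$ and $U$ already contains every relevant competitor. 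Your proposed patch for the truncation (that a far center would make $g_i(x)$ ``comparable to $l^i$, so it can be absorbed'') does not actually prove the stated inequality $g_i(x)\le d(x,y)$ --- it is an observation about the truncated quantity $\min\{\tau^{-1}g_i(x),l^i\}$ appearing in $f^{(t)}_i$, whereas the claim is about $g_i$ itself. Separately, your Case~1 uses $d(w,y)<r_i(w)$, which does not follow merely from $y\in C(l^i,y)=C_w$ (a deep descendant of $w$ can lie outside $B(w,r_i(w))$ once radii across lower levels accumulate). Both of these issues disappear if you run your case analysis under the contradiction hypothesis $d(x,y)<g_i(x)$ (which, as above, also forces $d(v,w)<4l^i$ so that $w\in U$), but at that point you have essentially reconstructed the paper's shorter single-shot argument, so I would recommend just using it.
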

\begin{proof}
\begin{itemize}
\item To prove the first claim is clear from assume that $g_i(y)$ is
minimized for some $u \in U$, then: $g_i(x) - g_i(y) \leq (d(u,x)
- r_i(u)) - (d(u,y) - r_i(u)) \leq d(x,y)$. If $g_i(y)$ is
minimized for $v$ a similar argument applies. Similarly, $g_i(y) -
g_i(x) \leq d(x,y)$.

\item We prove the second claim by contrary assumption that $d(x,y)
< g_i(x)$. It follows that $d(x,y) < r_i(v) - d(v,x)$ which
implies that $d(v,y) \leq d(v,x)+d(x,y) < r_i(v)$. Also for each
$u \in U$, we have $d(x,y) \leq d(u,x) - r_i(u)$ which implies
that $r_i(u) \leq d(u,x) - d(x,y) \leq d(u,y)$ but together these
inequalities imply that $y \in C(l^i,x)$ which is a contradiction.

\item As a consequence of the analysis of \cite{AbBaNe08},
we have with constant probability that $d(v,x)+ \rho \leq r_i(v)$ and also
for every $u \in U$, $d(u,x) - \rho > r_i(u)$. It follows that
$g_i(x) \geq \rho$ with constant probability.
\end{itemize}
\end{proof}

Given Claim~\ref{claim:distance-to-boundary} the analysis of
\cite{AbBaNe08} implies the following:

\begin{lemma}
\label{lemma:embedding-upper-assouad} For any $(x,y) \in X$ and
$t\in [D]$:
\begin{equation*}
|f^{(t)}(x) - f^{(t)}(y)| \leq  2^{7}\lambda\cdot d(x,y)^{1/2}.
\end{equation*}
\end{lemma}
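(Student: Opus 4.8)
The plan is to bound the contribution of each hierarchical level to $|f^{(t)}(x)-f^{(t)}(y)|$ and then sum a geometric series. Recall that $f^{(t)}(x)=\sum_i f^{(t)}_i(x)$ where $f^{(t)}_i(x)=\sigma^{(t)}_i(C(l^i,x))\cdot l^{-i/2}\min\{\tau^{-1}g_i(x),l^i\}$. I would fix $x,y\in X$ and split the sum over levels $i$ into three ranges: the \emph{low} levels where $l^i$ is much smaller than $d(x,y)$, the $O(1)$ \emph{middle} levels where $l^i\approx d(x,y)$, and the \emph{high} levels where $l^i$ is much larger than $d(x,y)$.

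For the high levels, $x$ and $y$ eventually lie in the same cluster $C(l^i,x)=C(l^i,y)$; once this happens the Bernoulli coefficients agree, so the level-$i$ contribution to the difference is $l^{-i/2}|\min\{\tau^{-1}g_i(x),l^i\}-\min\{\tau^{-1}g_i(y),l^i\}|$, which by the first bullet of Claim~\ref{claim:distance-to-boundary} (Lipschitzness of $g_i$, with the $\min$ only helping) is at most $\tau^{-1}l^{-i/2}d(x,y)$. Summing over high levels $l^i\gtrsim d(x,y)$ gives a geometric series in $l^{-1/2}$ that sums to $O(\tau^{-1}d(x,y)^{1/2})$. For the low levels, if $C(l^i,x)\neq C(l^i,y)$ then by the second bullet of Claim~\ref{claim:distance-to-boundary} each of $g_i(x),g_i(y)$ is at most $d(x,y)$, so each summand $l^{-i/2}\tau^{-1}g_i(x)$ is at most $\tau^{-1}l^{-i/2}d(x,y)$; but also the summand is trivially at most $l^{-i/2}\cdot l^i=l^{i/2}$, and for low levels $l^i<d(x,y)$ this second bound is the useful one, yielding a geometric series in $l^{1/2}$ dominated by its top term $O(d(x,y)^{1/2})$. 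The middle range contributes only a constant number of terms, each at most $\max\{\tau^{-1}l^{-i/2}d(x,y),\,l^{i/2}\}=O(\tau^{-1}d(x,y)^{1/2})$ with $l^i\Theta(d(x,y))$. Combining the three ranges and using $\tau=\ln 2/(8\lambda)$, so $\tau^{-1}=O(\lambda)$, gives $|f^{(t)}(x)-f^{(t)}(y)|\le 2^7\lambda\, d(x,y)^{1/2}$; I would track the constants to confirm the clean value $2^7$.

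The main obstacle is the bookkeeping of which of the two bounds on each summand is active in each range and of the exact geometric-series constants: one has to argue that in the cutover (middle) range both bounds are simultaneously $O(\tau^{-1}d(x,y)^{1/2})$, and that the Bernoulli variables being equal (rather than arbitrary) in the high range is exactly what prevents a divergent tail. Since this is the same bookkeeping carried out in \cite{AbBaNe08}, and Claim~\ref{claim:distance-to-boundary} supplies precisely the three facts about $g_i$ that their proof uses about $d(x,X\setminus P_i(x))$, the lemma follows by invoking their analysis verbatim with $g_i$ in place of that expression; the only thing to check is that the substitution is legitimate at every step, which Claim~\ref{claim:distance-to-boundary} guarantees.
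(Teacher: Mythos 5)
Your proposal is correct and takes essentially the same route as the paper: the paper's ``proof'' is just the single sentence that, given Claim~\ref{claim:distance-to-boundary}, the analysis of \cite{AbBaNe08} yields the lemma with $g_i$ substituted for $d(x,X\setminus P_i(x))$. Your sketch unpacks the low/middle/high geometric-series bookkeeping that that reference carries out, and correctly identifies the two bullets of Claim~\ref{claim:distance-to-boundary} as exactly the facts needed for the same-cluster and different-cluster ranges respectively.
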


\begin{lemma}
\label{lemma:embedding-lower-assouad} For any $(x,y) \in X$, with
probability at least $1-e^{D/16}$:
\begin{equation*}
\|f(x) - f(y)\|_p \geq  2^{-4}\cdot d(x,y)^{1/2}.
\end{equation*}
\end{lemma}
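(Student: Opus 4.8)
The plan is to mirror the lower‑bound analysis for Assouad‑type embeddings in \cite{AbBaNe08}, using Claim~\ref{claim:distance-to-boundary} to substitute the boundary‑distance function $g_i$ for the Lipschitz function $d(\cdot,X\setminus P_i(\cdot))$ that appears there. Since $f = D^{-1/2}\bigoplus_{1\le t\le D} f^{(t)}$, we have $\|f(x)-f(y)\|_2^2 = \tfrac1D\sum_{t=1}^D\bigl(f^{(t)}(x)-f^{(t)}(y)\bigr)^2$, so it suffices to show that with probability at least $1-e^{-D/16}$ a constant fraction of the coordinates $t$ satisfy $|f^{(t)}(x)-f^{(t)}(y)| = \Omega(d(x,y)^{1/2})$.

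First I would fix the critical scale. Let $i^\ast$ be chosen so that $l^{i^\ast} = \Theta(d(x,y))$ but small enough that $d(x,y)$ exceeds the diameter of any $H_{l^{i^\ast}}$‑cluster; then $C(l^{i^\ast},x)\neq C(l^{i^\ast},y)$, so $\sigma^{(t)}_{i^\ast}(C(l^{i^\ast},x))$ and $\sigma^{(t)}_{i^\ast}(C(l^{i^\ast},y))$ are independent fair coins. The per‑coordinate bound then comes from an anti‑concentration argument: fix $t$, condition on the hierarchy and on every Bernoulli variable except $\sigma^{(t)}_{i^\ast}(C(l^{i^\ast},x))$, and write $f^{(t)}(x)-f^{(t)}(y) = A_t + \sigma^{(t)}_{i^\ast}(C(l^{i^\ast},x))\cdot B$, where $B = l^{-i^\ast/2}\min\{\tau^{-1}g_{i^\ast}(x),\,l^{i^\ast}\}$ depends only on the hierarchy. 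Since $|A_t|+|A_t+B|\ge B$, the fair coin gives $\Pr[\,|f^{(t)}(x)-f^{(t)}(y)|\ge B/2\,]\ge\tfrac12$, and these events are independent across $t$ once the hierarchy is fixed. By Claim~\ref{claim:distance-to-boundary}(iii), $g_{i^\ast}(x)\ge\rho$ with probability bounded below over the hierarchy, and with the tuning of the truncated‑exponential radii one can take $\rho$ so that $\tau^{-1}\rho\ge l^{i^\ast}$, whence on this good event $B = l^{i^\ast/2} = \Omega(d(x,y)^{1/2})$.

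It then remains to amplify over the $D$ coordinates. Conditioned on the good hierarchy event, the number of coordinates $t$ with $|f^{(t)}(x)-f^{(t)}(y)|\ge B/2$ stochastically dominates $\mathrm{Bin}(D,\tfrac12)$, so it is at least $D/4$ with probability $\ge 1-e^{-D/16}$; on that event $\|f(x)-f(y)\|_2^2\ge \tfrac1D\cdot\tfrac D4\cdot(B/2)^2 = \Omega(d(x,y))$, and tracking the constants (with $l=8$, $\tau=\ln 2/(8\lambda)$) yields the claimed $\|f(x)-f(y)\|_2\ge 2^{-4}d(x,y)^{1/2}$. The companion bound Lemma~\ref{lemma:embedding-upper-assouad} is the same ABN argument run in the other direction, where $\rho$ plays no role and instead the Lipschitz and ``small across clusters'' bullets of Claim~\ref{claim:distance-to-boundary} are used to control the level‑$j$ terms for $j\ne i^\ast$.

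The \emph{crux}, and the reason the statement is deferred to \cite{AbBaNe08}, is the probabilistic bookkeeping caused by the radii $r_i(\cdot)$ being shared across all $D$ coordinates rather than resampled per coordinate: a naive accounting only gives success probability (constant)$\times(1-e^{-D/16})$. Resolving this requires verifying that at the critical scale the padding event of Claim~\ref{claim:distance-to-boundary}(iii) can be taken with $\rho$ as small as $\Theta(\tau l^{i^\ast})=\Theta(l^{i^\ast}/\lambda)$ while $x$ still sits well inside its cluster, so that its failure probability is negligible and folds into the $e^{-D/16}$ term; this is precisely the point at which the truncated‑exponential density was engineered. Everything else is the unchanged ABN analysis, once one checks — which is exactly the content of Claim~\ref{claim:distance-to-boundary} — that $g_i$ satisfies the same three properties ABN require of $d(\cdot,X\setminus P_i(\cdot))$.
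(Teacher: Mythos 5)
Your proposal follows the same high-level route as the paper's (very terse) proof: the per-coordinate ABN/Assouad argument giving $\lvert f^{(t)}(x)-f^{(t)}(y)\rvert = \Omega(d(x,y)^{1/2})$ with constant probability, followed by a Chernoff bound over the $D$ independent coordinate functions. Your anti-concentration step is cleanly stated and correct: fixing the hierarchy and every Bernoulli variable except $\sigma^{(t)}_{i^\ast}(C(l^{i^\ast},x))$ gives $f^{(t)}(x)-f^{(t)}(y)=A_t+\sigma B$ with $B$ hierarchy-measurable, and the fair coin forces $\lvert A_t+\sigma B\rvert\ge B/2$ with probability $1/2$; then conditioned on the hierarchy these events are independent across $t$, and $\Pr[\mathrm{Bin}(D,1/2)<D/4]\le e^{-D/16}$.

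Where the argument does not close is precisely the point you flag as the crux, and your proposed fix is not justified by what the paper provides. Claim~\ref{claim:distance-to-boundary}(iii) only asserts that $g_{i^\ast}(x)\ge\rho$ holds ``with constant probability,'' and the $r_i(\cdot)$-radii are shared across all $D$ coordinates, so this is a single hierarchy event $E$, not a per-$t$ event. The conditioning structure therefore yields, at best, $\Pr[\text{failure}] \le (1-\Pr[E]) + e^{-D/16}$, and with $\Pr[E]$ only a constant (or even $1-O(\operatorname{poly}^{-1}(\lambda))$, which is what the truncated-exponential density actually delivers) this is dominated by the $1-\Pr[E]$ term once $D=\Omega(\lambda\log\lambda)$ — it does not ``fold into'' $e^{-D/16}$. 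Your sentence asserting that the padding failure ``is negligible and folds into the $e^{-D/16}$ term'' is exactly the step that needs proof and is not supplied, nor does the paper supply it: the paper's own proof jumps directly from ``with probability at least $1/8$ per coordinate'' to ``Chernoff'' without separating the shared hierarchy randomness from the per-coordinate coins. So you have correctly located the weak point, but you have not resolved it — and note that the static GKL embedding sidesteps this entirely by enforcing the padding deterministically via the \Lovasz Local Lemma, an option unavailable in the dynamic setting the lemma is meant to cover.

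One small further point: you should also track the scales $i\ne i^\ast$ to justify that $A_t$ does not need to be bounded (your ``$\lvert A_t\rvert + \lvert A_t+B\rvert\ge B$'' trick indeed makes this unnecessary for the lower bound, which is a nice simplification over the usual ABN accounting); but you should say this explicitly, since the companion Lemma~\ref{lemma:embedding-upper-assouad} is exactly where those cross-scale terms must be controlled, and the two lemmas are not quite ``the same argument run in the other direction.''
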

\begin{proof}
It follows from the Assouad-type argument that with probability at
least $1/8$:
\begin{equation*}
|f^{(t)}(x) - f^{(t)}(y)| \geq  2^{-3}\cdot d(x,y)^{1/2}.
\end{equation*}
The lemma follows from applying a Chernoff bound.
\end{proof}

The theorem follows from an appropriate scaling of the embedding
so to achieve a contractive embedding with the required
properties.

\paragraph{Acknowledgements.} We thank Richard Cole, Robi Krauthgamer,
Manor Mendel and Michiel Smid for helpful conversations.

\bibliographystyle{plain}
\bibliography{oracle-dd}

\begin{thebibliography}{10}

\bibitem{AbBaNe08}
Ittai Abraham, Yair Bartal, and Ofer Neiman.
\newblock Embedding metric spaces in their intrinsic dimension.
\newblock In {\em Proc. of the Symposium on Discrete Algorithms (SODA)}, pages
  363--372, 2008.

\bibitem{AlHo00}
S.~Alstrup and J.~Holm.
\newblock Improved algorithms for finding level ancestors in dynamic trees.
\newblock In {\em Proc. of the International Colloquium on Automata, Languages
  and Programming (ICALP)}, pages 73--84, 2000.

\bibitem{Assouad83}
P.~Assouad.
\newblock Plongements lipschitziens dans.
\newblock {\em Rn. Bull. Soc. Math.}, 111(4):429--448, 1983.

\bibitem{BaKa06}
S.~Baswana and T.~Kavitha.
\newblock Faster algorithms for approximate distance oracles and all-pairs
  small stretch paths.
\newblock In {\em Proc. of the Symposium on Foundations of Computer Science
  (FOCS)}, pages 591--602, 2006.

\bibitem{BaSe06}
S.~Baswana and S.~Sen.
\newblock Approximate distance oracles for unweighted graphs in expected
  {$O(n^2)$} time.
\newblock {\em ACM Transactions on Algorithms}, 2(4):557--577, 2006.

\bibitem{CoGo06}
R.~Cole and L.~Gottlieb.
\newblock Searching dynamic point sets in spaces with bounded doubling
  dimension.
\newblock In {\em Proc. of ACM Symposium on Theory of Computing (STOC)}, pages
  574--583, 2006.

\bibitem{CoHa05}
R.~Cole and R.~Hariharan.
\newblock Dynamic lca queries on trees.
\newblock {\em SIAM J. Comput.}, 34(4):894--923, 2005.

\bibitem{GaoGuiNgu04}
J.~Gao, L.~Guibas, and A.~Nguyen.
\newblock Deformable spanners and applications.
\newblock In {\em Proc. of Symposium on Computational Geometry (SOCG)}, pages
  190--199, 2004.

\bibitem{GoRo08a}
L.~Gottlieb and L.~Roditty.
\newblock Improved algorithms for fully dynamic geometric spanners and
  geometric routing.
\newblock In {\em Proc. of Symposium on Discrete Algorithms (SODA)}, pages
  591--600, 2008.

\bibitem{GoRo08b}
L.~Gottlieb and L.~Roditty.
\newblock An optimal dynamic spanner for doubling metric spaces.
\newblock In {\em Proc. of the Annual European Symposium on Algorithms (ESA)},
  pages 478--489, 2008.

\bibitem{GuLeNaSm08}
J.~Gudmundsson, C.~Levcopoulos, G.~Narasimhan, and M.~Smid.
\newblock Approximate distance oracles for geometric spanners.
\newblock {\em ACM Transactions on Algorithms}, 4(1), 2008.

\bibitem{GuKrLe03}
A.~Gupta, R.~Krauthgamer, and J.~Lee.
\newblock Bounded geometries, fractals, and low-distortion embeddings.
\newblock In {\em Proc. of the Symposium on Foundations of Computer Science
  (FOCS)}, pages 534--543, 2003.

\bibitem{HaPe06}
S.~Har-Peled and M.~Mendel.
\newblock Fast construction of nets in low-dimensional metrics and their
  applications.
\newblock {\em SIAM J. Comput.}, 35(5):1148--1184, 2006.

\bibitem{InMa10}
P.~Indyk, A.~Magen, A.~Sidiropoulos, and A.~Zouzias.
\newblock On-line embeddings.
\newblock In {\em APPROX}, 2010.

\bibitem{Klein02}
P.~Klein.
\newblock Preprocessing an undirected planar network to enable fast approximate
  distance queries.
\newblock In {\em Proc. of the Symposium on Discrete Algorithms (SODA)}, pages
  820--827, 2002.

\bibitem{KoLe07}
Tsvi Kopelowitz and Moshe Lewenstein.
\newblock Dynamic weighted ancestors.
\newblock In {\em Proc. of the Symposium on Discrete Algorithms (SODA)}, pages
  565--574, 2007.

\bibitem{KrLe04}
R.~Krauthgamer and J.~Lee.
\newblock Navigating nets: Simple algorithms for proximity search.
\newblock In {\em Proc. of the Symposium on Discrete Algorithms (SODA)}, pages
  798--807, 2004.

\bibitem{MeNa06}
M.~Mendel and A.~Naor.
\newblock Ramsey partitions and proximity data structures.
\newblock In {\em Proc. of the Symposium on the Foundations of Computer Science
  (FOCS)}, pages 109--118, 2006.

\bibitem{MeSc09}
M.~Mendel and C.~Schwob.
\newblock Fast c-k-r partitions of sparse graphs.
\newblock {\em Chicago Journal of Theoretical Computer Science}, 2009(2), 2009.

\bibitem{MoTa10}
R.~A. Moser and G.~Tardos.
\newblock A constructive proof of the general lov\'{a}sz local lemma.
\newblock {\em Journal of the ACM}, 57(2):1--15, 2010.

\bibitem{Pa08}
M.~P{\v a}tra{\c s}cu.
\newblock Unifying the landscape of cell-probe lower bounds.
\newblock In {\em Proc. of the Symposium on Foundations of Computer Science
  (FOCS)}, pages 434--443, 2008.

\bibitem{Ro07}
L.~Roditty.
\newblock Fully dynamic geometric spanners.
\newblock In {\em Proc. of Symposium on Computational Geometry (SOCG)}, pages
  373--380, 2007.

\bibitem{RoThZw05}
L.~Roditty, M.~Thorup, and U.~Zwick.
\newblock Deterministic constructions of approximate distance oracles and
  spanners.
\newblock In {\em Proc. of International Colloquium on Algorithms, Languages
  and Programming (ICALP)}, pages 261--272, 2005.

\bibitem{RZ04}
L.~Roditty and U.~Zwick.
\newblock Dynamic approximate all-pairs shortest paths in undirected graphs.
\newblock In {\em Proc. of the Symposium on Foundations of Computer Science
  (FOCS)}, pages 499--508, 2004.

\bibitem{SoVeYu09}
C.~Sommer, E.~Verbin, and W.~Yu.
\newblock Distance oracles for sparse graphs.
\newblock In {\em Proc. of the Symposium on Foundations of Computer Science
  (FOCS)}, pages 703--712, 2009.

\bibitem{Thorup04}
M.~Thorup.
\newblock Compact oracles for reachability and approximate distances in planar
  digraphs.
\newblock {\em Journal of the ACM}, 51(6):993--1024, 2004.

\bibitem{ThZw05}
M.~Thorup and U.~Zwick.
\newblock Approximate distance oracles.
\newblock {\em Journal of the ACM}, 52(1):1--24, 2005.

\end{thebibliography}


\end{document}